\newcommand{\tabincell}[2]{\begin{tabular}{@{}#1@{}}#2\end{tabular}}
\begin{document}
\title{Flow Field Reconstructions with GANs based on Radial Basis Functions}

\author{Liwei~Hu,~\IEEEmembership{Student Member,~IEEE,}
        Wenyong~Wang,~\IEEEmembership{Member,~IEEE,}
        Yu~Xiang,~\IEEEmembership{Member,~IEEE},
        Jun~Zhang% <-this % stops a space
\thanks{W. Wang, professor, is with School of Computer Science and Engineering, University of Electronic Science and Technology of China, Chengdu 611731, China, e-mail:wangwy@uestc.edu.cn.}% <-this % stops a space
\thanks{W. Wang, special-term professor, is with Macau University of Science and Technology, Macau 519020, China.}% <-this % stops a space
\thanks{L. Hu, Y. Xiang, and J. Zhang are with University of Electronic Science and Technology of China, Chengdu 611731, China.}% <-this % stops a space
\thanks{Manuscript received April XX, XXXX; revised August XX, XXXX.}}

\markboth{Journal of \LaTeX\ Class Files,~Vol.~14, No.~8, August~2015}%
{Shell \MakeLowercase{\textit{et al.}}: Bare Demo of IEEEtran.cls for IEEE Journals}

\maketitle

% As a general rule, do not put math, special symbols or citations
% in the abstract or keywords.
\begin{abstract}
Nonlinear sparse data regression and generation have been a long-term challenge, to cite the flow field reconstruction as a typical example. The huge computational cost of computational fluid dynamics (CFD) makes it much expensive for large scale CFD data producing, which is the reason why we need some cheaper ways to do this, of which the traditional reduced order models (ROMs) were promising but they couldn't generate a large number of full domain flow field data (FFD) to realize high-precision flow field reconstructions. Motivated by the problems of existing approaches and inspired by the success of the generative adversarial networks (GANs) in the field of computer vision, we prove an optimal discriminator theorem that the optimal discriminator of a GAN is a radial basis function neural network (RBFNN) while dealing with nonlinear sparse FFD regression and generation. Based on this theorem, two radial basis function- based GANs (RBF-GAN and RBFC-GAN), for regression and generation purposes, are proposed. Three different datasets are applied to verify the feasibility of our models. The results show that the performance of the RBF-GAN and the RBFC-GAN are better than that of GANs/cGANs by means of both the mean square error (MSE) and the mean square percentage error (MSPE). Besides, compared with GANs/cGANs, the stability of the RBF-GAN and the RBFC-GAN improve by 34.62\% and 72.31\%, respectively. Consequently, our proposed models can be used to generate full domain FFD from limited and sparse datasets, to meet the requirement of high-precision flow field reconstructions.
\end{abstract}

% Note that keywords are not normally used for peerreview papers.
\begin{IEEEkeywords}
Generative adversarial network, Radial basis function neural network, Deep learning, Aerodynamic flow field reconstruction, The optimal discriminator theorem
\end{IEEEkeywords}

\IEEEpeerreviewmaketitle

\section{Introduction}
\IEEEPARstart{F}{low} field reconstructions \cite{yu2019flowfield} reflect the variation of air flows in spatial or time domain, which plays a fundamental role in the field of shape optimizations, inverse designs, and simulations of flight \cite{liu2017efficient,duan2020use,wang20203d}, etc. The term reconstructions denotes both the regression and generation of full domain flow field data (FFD) from given datasets \cite{sun2020physics}. Generally, considering the computational costs or inability of existing approaches of data producing, the number of data contained in flow field datasets is limited \cite{zhang2018application}. However the range of design parameters (e.g. height, Mach number, angle of attack, Reynolds number, etc) can be large, which results in the sparse nature of flow field datasets. Besides, the mapping between design parameters and response parameters (e.g. forces and moments, etc) is highly nonlinear. The sparse and nonlinear characteristics of flow field datasets make it difficult to reconstruct flow field precisely \cite{lu2018sparse}. In the area of aerodynamics, there are two different approaches to reconstruct flow field, namely full order models (FOMs) and reduced order models (ROMs) \cite{kou2017multi}.

The computational fluid dynamics (CFD), a representative of the FOMs, is a high-fidelity numerical simulation \cite{kumar2019computational}. CFD-based systems are the main approach to calculate high-fidelity FFD in aerodynamics. However, most of governing equations are nonlinear partial differential equations (PDEs) and have no numerical solutions. Therefore, it is impossible to calculate the aerodynamic response values under all flow states and aerodynamic shapes \cite{hu2020neural}. In addition, the large computational costs of CFD-based systems are beyond the tolerance of researchers both in time and hardware resources.

The drawbacks of FOMs motivated the development of ROMs, which can learn the nonlinear mapping between aerodynamic design parameters and response parameters from limited training data calculated by full order CFD-based systems \cite{white2020fast,su2016reduced}. Compared with FOMs, ROMs use different convergence strategies or models to reduce the calculation time on the premise of ensuring high accuracies. Besides, ROMs can deal with both linear and nonlinear FFD. As for linear FFD processing, the linear ROMs are considered, such as the autoregressive with exogenous input model \cite{su2016reduced}. As for nonlinear FFD processing, the nonlinear ROMs must be employed, such as Kriging model \cite{glaz2010reduced,da2011generation}, support vector machine \cite{chen2012support}, neural network \cite{mannarino2014nonlinear,ignatyev2015neural}, and block-oriented Wiener models \cite{kou2016novel}, etc. However, while dealing with nonlinear FFD, a large number of training data are needed to provided by CFD-based systems to guarantee the accuracies of ROMs. What's worse, ROMs can only output response values in given flow states, and cannot generate full domain FFD that satisfy the governing equations.

Neural networks, a part of the ROMs, have attracted much attention in aerodynamics in the past few years \cite{chen2020multiple,hu2020neural}. The conventional approaches are based on full connected networks (FCNs), which output response parameters under given flow states \cite{yu2019flowfield,milano2002neural}. However, the flow states need to be specified by researchers in advance and can not be automatically generated by FCN-based models. Hence these studies belongs to regression rather than generation. The generative adversarial networks (GANs), as new generative models, can generate full domain flow FFD from a random noise, which becomes a feasible way to solve this problem.

A GAN consists of two adversarial sub-models: the generator G and the discriminator D. The goal of D is to train a function $D(x)$ that can be used to distinguish the real data distribution $\mathbb{P}_{r}(x)$ from the generated data distribution $\mathbb{P}_{g}(x)$. The goal of G is to train a function $G(z)$ that can generate pseudo-data from a simple data distribution $\mathbb{P}_{z}(z)$. D and G are trained alternately until D can not distinguish the real data from the generated data. GANs have achieved great success in the field of image generation \cite{volkhonskiy2020steganographic,tasar2020colormapgan}, image translation \cite{lira2020ganhopper}, feature analysis \cite{zhang2020copy}, and machine translation \cite{wu2018adversarial}, etc. These studies demonstrated that GANs possess the capability to capture the distribution of data contained in datasets, and further map input into output space, which brings new ideas to the field of flow field reconstructions. Recently, researchers adopted GANs to reconstruct flow fields, which have become a state-of-the-art approach \cite{siddani2020machine}.

The existing studies in flow field reconstructions are categorized into image-based and non-image-based flow field reconstructions. In the field of image-based flow field reconstructions, the recursive CNNs-based GAN in unsteady flow predictions \cite{lee2019data}, the recurrent neural networks (RNNs)-based GAN in the analyses of the temporal continuity \cite{kim2020deep} and the tempoGAN for high-resolution flow field image generations \cite{xie2018tempogan} are outstanding models. All of them use flow field images as the input to train convolutional neural networks (CNN)-based GANs, and the outputs are still images. Although, these CNN-based GANs can generate high-resolution flow field images to express the variations of air flows, the non-image-based flow field reconstructions can omit the process of converting FFD into images while maintaining high accuracies. Farimani used the original FFD as a training set and adopted cGAN to study flow field reconstructions \cite{farimani2017deep}. The results showed that the MAE of generated FFD was lower than that of traditional ROMs. However, this non-image-based model is not stable enough (the error fluctuates rapidly while training).

Motivated by the problems of existing approaches and inspired by the adversarial learning mechanism of GANs, we prove an theorem that a radial basis function neural network (RBFNN) is the optimal discriminator of a GAN while dealing with nonlinear sparse data come from a continuous function. Based on this theorem, a radial basis function-based GAN (RBF-GAN) and a radial basis function cluster-based GAN (RBFC-GAN), two novel data-driven models, are proposed to implement both the regression and generation for nonlinear sparse FFD\footnote{The source code of RBF-GAN and RBFC-GAN can be found in https://github.com/huliwei123/RBF-GANs.git.}. Different from the GANs-based models mentioned above, the advantages of our models are a) they are both regression models and generative models, which can quickly generate full domain FFD in accordance with the variation lows in flow fields; b) they are data-based interpolation models, which do not require thorough knowledge of aerodynamics and can significantly reduce the error of generated data and improve the stability of GANs; and c) the training sets of the RBF-GAN and the RBFC-GAN are nonlinear sparse FFD, not images generated from FFD. Subsequently, three different datasets are applied to verify the validity and feasibility of the proposed models. The results show that a) the mean square error (MSE) and the mean square percentage error (MSPE) \cite{kou2017multi} of the data generated by the RBF-GAN/RBFC-GAN are lower than that of data generated by full connected neural networks-based GANs; and b) compared with already proposed cGAN, the stability of the RBF-GAN and the RBFC-GAN improves by 34.62\% and 72.31\%, respectively.

To summarize, the contributions of our work are:
\begin{enumerate}
\item[a)] we prove an theorem that an RBFNN is the optimal discriminator of a GAN while dealing with nonlinear sparse data;
\item[b)] based on a), we propose the RBF-GAN, both a generative and a regression model, to generate full domain FFD from limited nonlinear sparse data;
\item[c)] based on b), we propose the RBFC-GAN to further improve the fidelity of generated nonlinear sparse full domain FFD, as well as the stability of GANs after converged.
\end{enumerate}

The structure of this paper is as follows. Section II introduces the research status of RBFNNs and GANs in the field of aerodynamic response predictions and flow field reconstructions, respectively. In Section III, the optimal discriminator theorem for nonlinear sparse data generation and the mechanisms of the proposed RBF-GAN and RBFC-GAN are elaborated. In Section IV, three datasets from different scenes are applied to validate the effectiveness of the proposed models. The conclusion of our work is shown in Section V.

\section{Related Work}
In this section, we introduce the fundamental of RBFNNs and GANs, and their applications in FFD processing.

\subsection{RBFNNs}

\subsubsection{The Fundamental of RBFNNs}
An RBFNN (Fig.\ref{fig_RBFNN}), a regression model, is a three-layer feedforward neural network, which consists of one input layer, one hidden layer and one output layer \cite{moody1989fast}. Considering the combination of RBFNNs and GANs, the number of output neurons in this figure is set to 1. Different from fully connected networks (FCNs) which use sigmoid or ReLU as the activation function, an RBFNN employ an RBF instead \cite{tenney2018deep}. Generally, the most commonly used RBF is the Gaussian kernel function:

\begin{equation}
\nonumber
\label{equ_guassian_kernel_function}
\ensuremath{g_{1}\left(\mathbf{x}_{i},\mathbf{v}_{j},\sigma_{j}\right)=\exp\left(-\frac{\left\Vert \mathbf{x}_{i}-\mathbf{v}_{j}\right\Vert ^{2}}{2\sigma_{j}^{2}}\right)}
\end{equation}
where $||\cdot||$ is a norm on the function space, $\mathbf{x}_{i}$ denotes the $i$th sample in the training set, $\mathbf{v}_{j}$ and $\sigma_{j}$ denote the center and width of the $j$th hidden neuron, respectively. The output of an RBFNN can be written as:

\begin{equation}
\nonumber
\label{equ_rbfnn}
\ensuremath{\mathbf{y}_{i}=w_{0}+\sum_{j=1}^{q}\mathbf{w}_{j}\cdot g_{1}\left(\mathbf{x}_{i},\mathbf{v}_{j},\sigma_{j}\right)}
\end{equation}
where $q$ denotes the number of hidden neurons, $\mathbf{w}_{j}$ denotes the weight between one hidden neuron and one output neuron, and $w_{0}$ denotes the bias of the only output neurons.

\begin{figure}[tbh]
\centering
\includegraphics[scale=0.5]{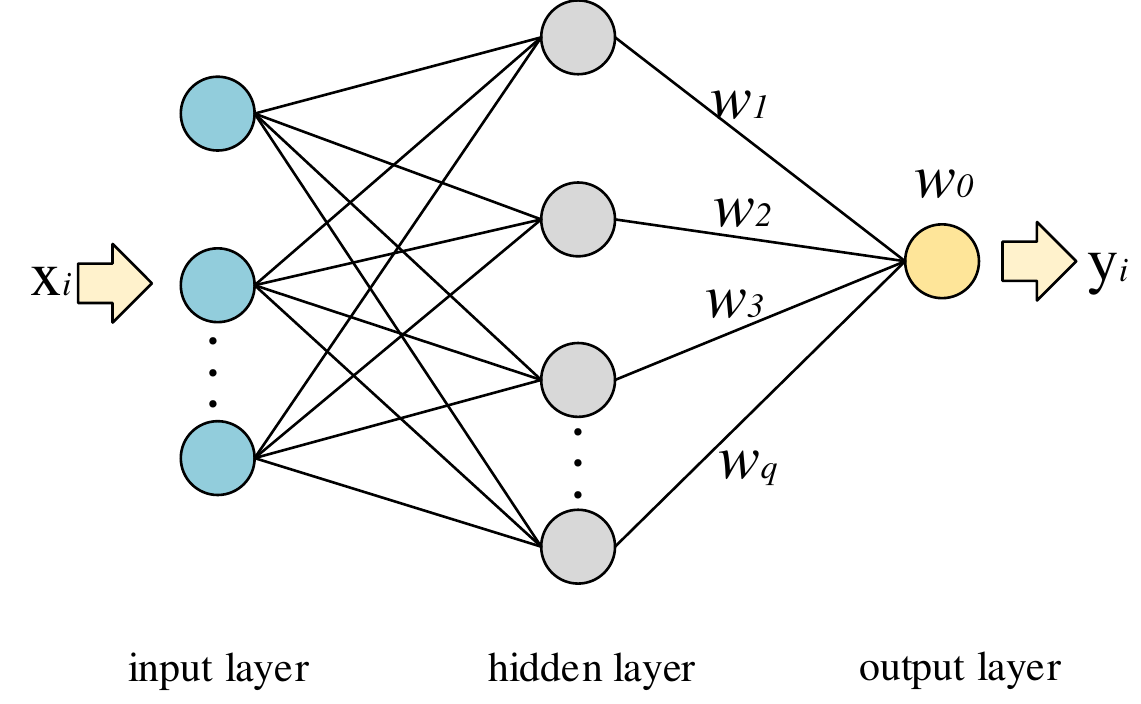}
\caption{The structure of an RBFNN. In particular, the center $\mathbf{v}_{j}$
of the RBF is a vector with the same dimension as the input $\mathbf{x}_{i}$,
while the width $\sigma_{j}$ is a scalar.}\label{fig_RBFNN}
\centering
\end{figure}

Theoretically, an RBFNN converges faster than a FCN with the same number of hidden neurons. Consider an input $\mathbf{x}$ and two hidden neurons $n_{1}$ and $n_{2}$ with parameters $\mathbf{v}_{1} \gg \mathbf{v}_{2}=\mathbf{x}$ and $\sigma_{1}=\sigma_{2}=1$, the output of $n_{1}$ is close to $\mathbf{0}$, while the output of $n_{2}$ is $\mathbf{1}$, according to (\ref{equ_guassian_kernel_function}). Therefore, $n_{1}$ produces a gradient of 0, which leads to the result that $\mathbf{v}_{1}$ and $\sigma_{1}$ remain unchanged, i.e. vanishing gradient \cite{tan2019vanishing}. In contrast, only $\mathbf{v}_{2}$ and $\sigma_{2}$ of neuron $n_{2}$ can be updated. The approach of updating parts of the parameters is called local updating mechanism, which perfectly explains why an RBFNN converges faster than a FCN with the same structure.

\subsubsection{RBFNN-based Aerodynamic ROMs }

RBFNNs-based aerodynamic ROMs are widely used to predict aerodynamic response parameters, because of its powerful nonlinear input-output mapping ability. According to the number of kernel functions (i.e. radial basis functions), the existing studies can be categorized into two groups: single kernel RBFNNs and multi-kernel RBFNNs.

Single kernel RBFNNs are the most commonly used models in aerodynamic response predictions. In 2005, Hovcevar et al. first adopted RBFNN with Gaussian kernel function to predict the fluctuations in the passive-tracer concentration for the turbulent wake behind an airfoil  \cite{hovcevar2005turbulent}. The explosion of RBFNNs in aerodynamic data modeling came after 2010 \cite{duraisamy2017status}. Ghoreyshi applied single kernel RBFNNs to study unsteady aerodynamic modeling at low-speed flow conditions \cite{ghoreyshi2013computational}. Different from Ghoreyshi, Zhang divided the whole flow field into three parts: the near-wall region, the wake region and the far-field region \cite{zhang2018machine,zhang2010high}. Three single kernel RBFNNs were built to predict the eddy velocity. All these models mentioned above use the Gaussian kernel function.

Multi-kernel RBFNNs refer to those RBFNNs with multiple RBFs, which are recently studied. Kou et al. proposed a multi-kernel RBFNN, and used it to model unsteady aerodynamics \cite{kou2017multi}. Kou implemented this model by adopting the linear combination of the Gaussian kernel function and wavelet kernel function mentioned in \cite{zhang2004wavelet}. The experimental results showed that the errors of multi-kernel neural networks is lower than that of single kernel RBFNNs in both fixed Mach number test case and variant Mach number test case.

The RBFNNs mentioned above are not applied in flow field reconstructions, because RBFNNs are interpolation models and do not have the ability to generate abundant full domain FFD. Hence, it is difficult for RBFNNs to achieve rapid and refined flow field management.

\begin{figure*}[tbh]
\centering
\includegraphics[scale=0.7]{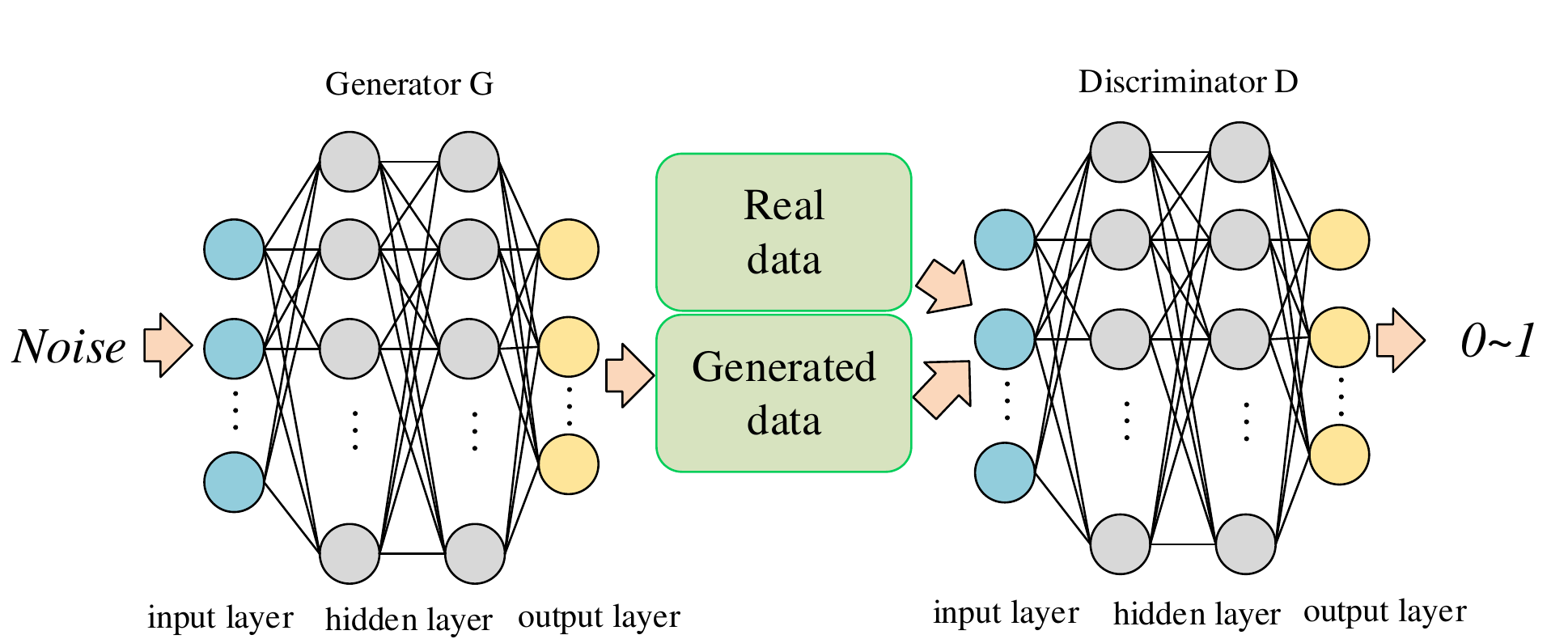}
\caption{The structure of a classic GAN.}\label{fig_GAN}
\centering
\end{figure*}

\subsection{GANs}

\subsubsection{The Fundamental of GANs}

Fig.\ref{fig_GAN} illustrates the structure of a GAN, which takes a noise with specific distributions as input and outputs a variable with a value between 0 and 1. The loss functions of D and G are:

\begin{equation}\label{equ_loss_D}
L_{D}=-\mathbb{E}_{\boldsymbol{x}\sim p_{\text{r }}(\boldsymbol{x})}[\log D(\boldsymbol{x})]-\mathbb{E}_{\boldsymbol{z}\sim p_{\boldsymbol{z}}(z)}[\log(1-D(G(\boldsymbol{z})))]
\end{equation}

\begin{equation}\label{equ_loss_G}
\ensuremath{L_{G}=\mathbb{E}_{z\sim p_{z}(z)}[\log(1-D(G(z)))]}
\end{equation}
where $\boldsymbol{x}$ denotes the real data that follow the distribution density $p_{r}(x)$, $\boldsymbol{z}$ denotes the input noise with the distribution density $p_{z}(z)$. The training goal of a GAN is to make the above two formulas reach their minimum.

Although GANs have achieved great success in many fields, they still face many problems, for example mode collapse \cite{mordido2020microbatchgan,li2020efficient}, convergence problem \cite{arjovsky2017towards}, vanishing gradient \cite{arjovsky2017wasserstein}, application problems, instability \cite{hu2020neural}, etc. To solve these problems, some new models, algorithms and loss functions were proposed \cite{creswell2018generative}. The Wasserstein-GAN and earth-mover distance \cite{arjovsky2017wasserstein} essentially solved the problem of convergence, and they alleviate the problem of mode collapse. The Dual-GAN \cite{zhao2017dual} and the DC-GAN \cite{li2018unsupervised} expands the application of GANs in the field of computer vision \cite{yang2018learning}. The conditional GAN (cGAN) \cite{ye2020deep} and the info-GAN \cite{chen2016infogan} can generate desired data under specific conditions, which expand the basic application of GANs. Based on current studies, the problems of model collapse and instability are still the major issues of GANs. In this paper, our proposed models alleviate the model collapse and instability, which are discussed in Section IV.B.1) and IV.B.3), respectively.

\subsubsection{Generative Adversarial Networks-based ROMs for nonlinear sparse FFD regression and generation}

In this section, we only discuss the non-image-based studies. Usually, the cGAN rather than GAN is widely used. Because most of the governing equations (e.g.  Navier-Stokes equation) are partial differential equations and have no numerical solutions, which makes it infeasible to use the governing equations to evaluate errors of GANs-based models.

Farimani adopted cGANs, both regression models and generative models, to learn the transport phenomena in aerodynamics \cite{farimani2017deep}. The results show that the MAE of generated data is lower than 0.01, however, the stability of cGAN-based models is not discussed. Hu et al. adopted cGANs to predict the velocity $u$ in Burgers' equation and the coefficient of pressure $C_{p}$ in Navier-Stokes equation \cite{hu2020neural}. By comparing cGANs with different structures, the stability of them is emphasized: after converged, the errors of cGANs during the training procedure fluctuates rapidly in a relatively large range, which result in that a) the designer of the models cannot decide when to stop the training process according to their experiences; and b) the models cannot generate high-fidelity full domain FFD.

As a result, existing non-image-based models are not suitable for the case of using nonlinear sparse FFD as data sets, because that the errors of generated FFD is relatively large, or they are not stable. To overcome the drawbacks of existing models in flow field reconstructions and considering the characteristics of FFD, we prove that an RBFNN is the optimal discriminator of a GAN in next section. Based on this proof, the RBFNNs are introduced into GANs, which naturally form two novel GANs named RBF-GAN and the RBFC-GAN. Compared with existing approaches \cite{pan2020tec,zhao2017dual,wang2018sentigan}, these two models are more accurate and stable.

\section{Methodology}

\subsection{Optimal Discriminator of GANs for Nonlinear Sparse Data generation}

GoodFellow et al. proved that the optimal discriminator exists where $P_{r}(x)=P_{g}(x)$, i.e. $D(x)=0.5$ \cite{goodfellow2014generative}. However, the form of the optimal discriminator of a GAN was still unsolved. In this section, we first introduce the best approximation theory to clarify the nature of the neural networks to approximate given continuous functions, and then we prove that an RBFNN is the optimal discriminator while dealing with nonlinear sparse data come from these functions.

\subsubsection{The Best Approximation Problem}

The best approximation problem of neural networks could be stated as follows:

\newtheorem{definition}{\bf Definition}[section]
\begin{definition}
\label{definition_of_optimal_appeoximation}
Given a continuous function $F(X)$ belongs to function space $\Phi$ and a subset $A \subseteq \Phi$, find a function $f(W,X) \epsilon A$ that depends continuously on $W \epsilon R^{n}$ and $X$, such that

\begin{equation}
\label{equ_approximation}
\ensuremath{d(F(X),A)\equiv\inf_{f\in A}\|F(X)-f(W,X)\|}
\end{equation}
for all $W$ in the set $R^{n}$, where $d(F(X),A)$ denotes the distance of $F(X)$ from subset $A$. If the infimun of $||F(X)-f(W,X)||$ is attained for some function $f(W^{*},X)$ of $A$, we suppose that $f(W^{*},X)$ is the best approximation for $F(X)$.
\end{definition}

The approximation problem of neural networks is to construct a smooth mapping $f(W,X)$ between inputs and outputs from limited data. The smooth mapping denotes that small variations in inputs result in small variations in corresponding outputs. Paradoxically, the information contained in the nonlinear sparse datasets is not enough to reflect these small variations. For example, unlike tens of thousands of samples in computer vision, the sample number of a general aerodynamic dataset is usually less than 1000 and hard to increase \cite{zhang2018application,white2020fast,li2020efficient}. Besides, the high nonlinear characteristic of aerodynamic data undoubtedly increases the difficulty of approximations. Therefore, the best approximation is the only way to break this dilemma. Formula (\ref{equ_approximation}) denotes a function of the smooth mapping $f(W,X)$, whose minimum value can be calculated by Euler-Lagrange equation, which is introduced in next section.

It was proved that a FCN does not have the optimal approximation property for any continuous function, but an RBFNN does \cite{girosi1990networks}. Therefore, in next sections, we prove a theorem that an RBFNN is the optimal discriminator of a GAN while dealing with nonlinear sparse data.

\subsubsection{The Optimal Discriminator}

\newtheorem{thm}{\bf Theorem}[section]
\begin{thm}\label{theorem_best_D}
Assuming that $\mathcal{D}=\{(z_{i},x_{i})\epsilon R^{n}|i=1,\cdots N\}$ is a nonlinear and sparse dataset, where $x_{i}$ denotes the real data, and $z_{i}$ denotes the noise with a particular distribution, then the optimal discriminator of a GAN is an RBFNN.
\end{thm}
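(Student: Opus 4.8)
The plan is to combine the classical pointwise characterization of the optimal discriminator with the best-approximation (regularization) theory for neural networks invoked above. First I would fix the generator $G$ and minimize the discriminator loss $L_{D}$ in \eqref{equ_loss_D} pointwise in $x$: writing $L_{D}=-\int\big(p_{r}(x)\log D(x)+p_{g}(x)\log(1-D(x))\big)\,dx$, the integrand is minimized at $D^{*}(x)=p_{r}(x)/\big(p_{r}(x)+p_{g}(x)\big)$, which recovers Goodfellow's result that $D^{*}(x)=1/2$ exactly when $p_{r}=p_{g}$. The crucial observation is that, since the FFD are samples of a continuous flow field, the densities $p_{r}$ and $p_{g}$ are continuous, so $D^{*}$ is a continuous function on $R^{n}$; the discriminator is therefore required to reproduce a continuous target $F(X)\equiv D^{*}(X)$ from only the $N$ sparse, highly nonlinear samples in $\mathcal{D}$. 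This is exactly the best-approximation problem of Definition \ref{definition_of_optimal_appeoximation} with that choice of $F$ and with $A$ the hypothesis class realized by the discriminator.

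Next I would make the approximation well posed. Because $\mathcal{D}$ is sparse, matching $F$ only at the sample points is ill posed, so, following the regularization viewpoint, I would replace the bare distance in \eqref{equ_approximation} by the Tikhonov functional
\[
H[f]=\sum_{i=1}^{N}\big(x_{i}-f(W,z_{i})\big)^{2}+\lambda\,\|Pf\|^{2},
\]
where $P$ is a smoothness stabilizer encoding the requirement that small changes in the input produce small changes in the output, and $\lambda>0$. Setting the functional derivative of $H[f]$ to zero gives an Euler--Lagrange equation of the form $\widehat{P}P f(x)=\tfrac{1}{\lambda}\sum_{i=1}^{N}\big(x_{i}-f(W,z_{i})\big)\,\delta(x-z_{i})$, whose solution is a superposition $f(x)=w_{0}+\sum_{i=1}^{N}w_{i}\,G(\|x-z_{i}\|)$ of translates of the Green's function $G$ of the self-adjoint operator $\widehat{P}P$. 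For the stabilizer whose spectrum weights derivatives of every order so that $\widehat{P}P$ is the associated (pseudo)differential operator, $G$ is precisely the Gaussian $G(\|x-z_{i}\|)=\exp\big(-\|x-z_{i}\|^{2}/(2\sigma^{2})\big)$, i.e. the kernel $g_{1}$ of Section II. Hence the minimizer of $H[f]$, the best approximation to the optimal discriminator $D^{*}$ over the sparse set $\mathcal{D}$, has exactly the input/output form of a one-hidden-layer RBFNN with a single output neuron, which is the assertion of the theorem.

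Finally I would close the loop by contrasting this with an FCN: by \cite{girosi1990networks} the sigmoid/ReLU hypothesis class does not possess the best-approximation property, so its minimizing element over $A$ need not exist, whereas the RBF class does attain the infimum in \eqref{equ_approximation}; together with the local-updating argument of Section II this also explains why an RBF discriminator is the one actually reached in training. The step I expect to be the main obstacle is the passage from the GAN's cross-entropy objective to the quadratic regularization functional $H[f]$: one has to argue that, once the pointwise optimum $D^{*}$ is identified as the continuous target, learning it from sparse data is governed by a Tikhonov-type functional whose Euler--Lagrange equation admits a closed-form solution, and that the particular stabilizer for which the Green's function is Gaussian is the natural one for nonlinear sparse FFD. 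A secondary subtlety is the bookkeeping between the indexing by the noise $z_{i}$ and by the data $x_{i}$ inside $\mathcal{D}$, and stating precisely in what sense ``the optimal discriminator is an RBFNN'' --- a statement about representability and architecture rather than about the equilibrium value $1/2$ attained only at the global optimum.
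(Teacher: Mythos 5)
Your proposal follows the same regularization-theory blueprint as the paper --- stabilizer $\lambda\|Pf\|^{2}$, Euler--Lagrange equation, Dirac deltas at the sample points, Green's function of $\widehat{P}P$, and radiality from the translational/rotational invariance of $P$ --- so the back end of your argument matches the paper's (\ref{equ_export_with_dirac})--(\ref{equ_rbfnn_form_final}). The gap is in the front end, and it is exactly the step you flag as the ``main obstacle'': you replace the discriminator's objective by the quadratic Tikhonov functional $\sum_{i}(x_{i}-f(W,z_{i}))^{2}+\lambda\|Pf\|^{2}$, which is never justified and is moreover written in the wrong variables --- a functional penalizing the residual between the real datum $x_{i}$ and $f$ evaluated at the noise $z_{i}$ is a regression of data on noise, i.e.\ a generator-type fit, not a discriminator. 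The discriminator acts on $x$, and its loss is the cross-entropy (\ref{equ_loss_D}), not a sum of squares.

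The paper shows that no such passage to a quadratic loss is needed: it regularizes the cross-entropy itself, $H[D]=-\sum_{i}\{\mathbb{P}_{r}(x_{i})\log D(x_{i})+\mathbb{P}_{g}(x_{i})\log[1-D(x_{i})]\}+\lambda\|PD(x)\|^{2}$, and applies the Euler--Lagrange equation (\ref{equ_euler_lagrange}) directly to that functional (it contains no $D'(x)$, so the derivative term vanishes). This yields $\widehat{P}PD(x)=\frac{1}{2\lambda}\sum_{i}w_{i}\,\delta(x-x_{i})$ with weights $w_{i}=\mathbb{P}_{r}(x_{i})\frac{1}{D(x_{i})}-\mathbb{P}_{g}(x_{i})\frac{1}{1-D(x_{i})}$, and hence $D(x)=\frac{1}{2\lambda}\sum_{i}w_{i}R(\|x-x_{i}\|)$: the RBF expansion is centered at the data $x_{i}$ and the weights retain the GAN structure (they measure the discrepancy between $\mathbb{P}_{r}$ and $\mathbb{P}_{g}$ at each sample) rather than being least-squares residuals. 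To repair your argument you would either have to adopt this direct route, or, if you insist on first extracting $D^{*}=p_{r}/(p_{r}+p_{g})$ and then approximating it, set up the Tikhonov problem as $\sum_{i}(D^{*}(x_{i})-f(W,x_{i}))^{2}+\lambda\|Pf\|^{2}$ --- centers at $x_{i}$, target $D^{*}(x_{i})$ --- and justify why minimizing this surrogate is equivalent to minimizing the regularized cross-entropy; as written, neither is done. A minor additional point: the Gaussian is the Green's function of only one specific stabilizer, whereas the paper's conclusion needs only radiality, $R(x;x_{i})=R(\|x-x_{i}\|)$, and then \emph{chooses} the Gaussian as one admissible kernel, which is the weaker and safer claim.
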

\begin{proof}\renewcommand{\qedsymbol}{}
To prove Theorem \ref{theorem_best_D}, we introduce the regularization theory \cite{girosi1990networks}:

\begin{equation}\label{equ_regularization}
\ensuremath{\sum_{i}\left(f(x_{i})-d_{i}\right)^{2}+\lambda\left\Vert Pf(x)\right\Vert ^{2}}
\end{equation}
where $\sum_{i}\left(f(x_{i})-d_{i}\right)^{2}$ is the original loss function of a neural network, which measures the distance between the generated data $f(x_{i})$ and the real data $d_{i}$, $\left\Vert Pf(x)\right\Vert ^{2}$ embeds a priori information on the function $f(x)$, $P$ is a differential constrain operator \cite{poggio1990networks}, and $\lambda$ is the coefficient of the regular term.

Based on (\ref{equ_regularization}), we write (\ref{equ_loss_D}) as following:

\begin{equation}
\label{equ_loss_D_norm}
\begin{aligned}
L_{D}=&-\mathbb{E}_{\boldsymbol{x}\sim p_{\text{r }}(\boldsymbol{x})}[\log D(\boldsymbol{x})]-\\
&\mathbb{E}_{\boldsymbol{z}\sim p_{\boldsymbol{z}}(z)}[\log(1-D(G(\boldsymbol{z})))]+\lambda||PD(x)||^{2}
\end{aligned}
\end{equation}

Formula (\ref{equ_loss_D_norm}) describes the loss function of $D(x)$, and can be written as the form of a functional:

\begin{equation}
\label{equ_loss_D_functional}
\begin{aligned}
H[D(x)]=&-\mathbb{E}_{\boldsymbol{x}\sim p_{\text{r }}(\boldsymbol{x})}[\log D(\boldsymbol{x})]-\\
&\mathbb{E}_{\boldsymbol{z}\sim p_{\boldsymbol{z}}(z)}[\log(1-D(G(\boldsymbol{z})))]+\lambda||PD(x)||^{2}
\end{aligned}
\end{equation}

As proved in \cite{goodfellow2014generative}, the loss function of a discriminator reached its minimum value while $D(x)=0.5$, which means that the minimum value of (\ref{equ_loss_D_functional}) does exist. Therefore, the problem of the optimal discriminator of a GAN is to determine the function $D(x)$ such that (\ref{equ_loss_D_functional}) reaches its minimum value, which we use Euler-Lagrange equation to calculate the minimum value of $H[D(x)]$ to accomplish the optimal approximation defined in Theorem \ref{definition_of_optimal_appeoximation}.

Formula (\ref{equ_loss_D_functional}) could be expanded into a discrete form:

\begin{equation}\label{equ_functional_sum}
\begin{aligned}
H[D(x)]=&-\sum_{i=1}^{N}\{\mathbb{P}_{r}(x_{i})logD(x_{i})\\
&+\mathbb{P}_{g}(x_{i})log[1-D(x_{i})]\}+\lambda||PD(x)||^{2}
\end{aligned}
\end{equation}

To minimize the functional $H[D(x)]$, we introduce the Euler-Lagrange equation \cite{courant1962}:

\begin{equation}\label{equ_euler_lagrange}
\frac{\partial L}{\partial f\left(x\right)}-\frac{d}{dx}(\frac{\partial L}{\partial f'\left(x\right)})\equiv0
\end{equation}
where $L$ is a functional of $x$, $f\left(x\right)$ and $f'\left(x\right)$. Similarly, $H[\cdot]$ in (\ref{equ_functional_sum}) is a functional of $x$, $D\left(x\right)$ and $D'\left(x\right)$, i.e. $H[x, D\left(x\right), D'\left(x\right)]$. However, there is no $D'\left(x\right)$ in (\ref{equ_functional_sum}), the functional $H[\cdot]$ can be written as:

\begin{equation}\label{equ_functional_sum_HXDX}
\begin{aligned}
H[x,D(x),D'(x)]&=H(x,D(x))\\&=-\sum_{i=1}^{N}\{\mathbb{P}_{r}(x_{i})logD(x_{i})\\
&+\mathbb{P}_{g}(x_{i})log[1-D(x_{i})]\}+\lambda||PD(x)||^{2}
\end{aligned}
\end{equation}

From (\ref{equ_functional_sum_HXDX}), we get:
\begin{small}
\begin{equation}
\label{equ_partial_derivative HD}
\begin{aligned}
\begin{cases}
\frac{\partial H}{\partial D(x)}=-\sum_{i=1}^{N}[\mathbb{P}_{r}(x_{i})\frac{1}{D(x_{i})}-\mathbb{P}_{g}(x_{i})\frac{1}{1-D(x_{i})}]+2\lambda\hat{P}PD(x)\\
\frac{d}{dx}(\frac{\partial H}{\partial D'\left(x\right)})\equiv0
\end{cases}
\end{aligned}
\end{equation}
\end{small}
where $\hat{P}$ is the adjoint of the differential constrained operator $P$.

Associated with (\ref{equ_euler_lagrange}) and (\ref{equ_partial_derivative HD}), we get:

\begin{equation}\label{equ_EL_export}
-\sum_{i=1}^{N}[\mathbb{P}_{r}(x_{i})\frac{1}{D(x_{i})}-\mathbb{P}_{g}(x_{i})\frac{1}{1-D(x_{i})}]+2\lambda\hat{P}PD(x)=0
\end{equation}

Then:

\begin{equation}\label{equ_export}
\hat{P}PD(x)=\frac{1}{2\lambda}\sum_{i=1}^{N}[\mathbb{P}_{r}(x_{i})\frac{1}{D(x_{i})}-\mathbb{P}_{g}(x_{i})\frac{1}{1-D(x_{i})}]
\end{equation}
where, $\sum_{i=1}^{N}[\mathbb{P}_{r}(x_{i})\frac{1}{D(x_{i})}-\mathbb{P}_{g}(x_{i})\frac{1}{1-D(x_{i})}]$ in the right part represents the distance between the real data distribution and the generated data distribution, which is consistent with the training mechanism of GNAs, and $\hat{P}PD(x)$ represents the optimal discriminator constrained by the differential constrained operator $P$.

\begin{figure*}[tbh]
\centering
\includegraphics[scale=0.7]{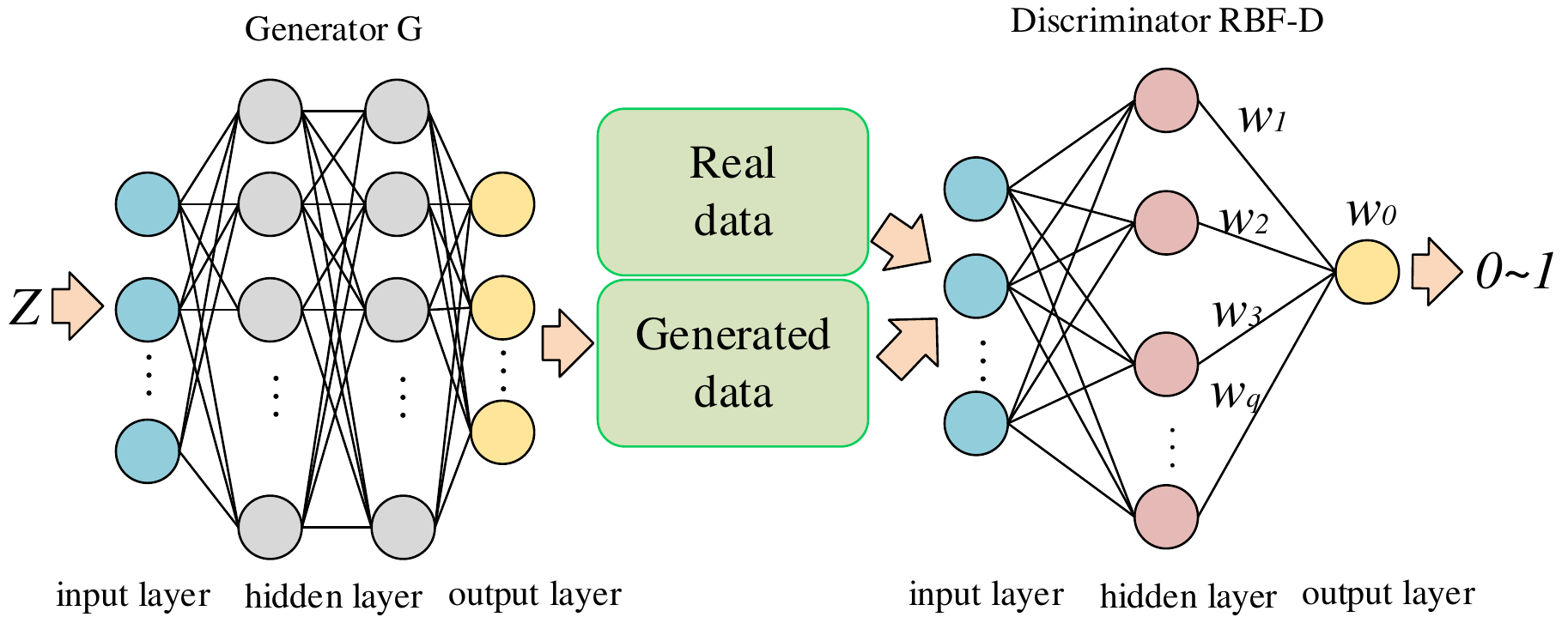}
\caption{The structure of an RBF-GAN. In the hidden layer of the discriminator
D, the pink neurons represent RBFs. In this paper, we chose the Gaussian
kernel function to realize the RBF-based discriminator D.}\label{fig_structure_of_RBF_GAN}
\centering
\end{figure*}

Since FFD in the training set is usually limited and sparse, it is hard to capture the small variations of inputs or outputs. Therefore, the information contained in every FFD should be fully utilized to achieve more accurate approximation than existing GANs. Based on (\ref{equ_export}), we introduce the Dirac function \cite{yerokhin2020calculations} that is widely used in the density problems in discrete distribution (or sparse data) in space or time to obtain :

\begin{equation}\label{equ_export_with_dirac}
\hat{P}PD(x)=\frac{1}{2\lambda}\sum_{i=1}^{N}[\mathbb{P}_{r}(x_{i})\frac{1}{D(x_{i})}-\mathbb{P}_{g}(x_{i})\frac{1}{1-D(x_{i})}]\delta(x-x_{i})
\end{equation}
where $\delta(x-x_{i})$ is the Dirac function, implies that $D(x)$ is only activated where $x=x_{i}$ precisely and de-activated where $x \neq x_{i}$. To simplify (\ref{equ_export_with_dirac}), we introduce Green's function $R(x;y)$, which satisfies the following distributional formula \cite{girosi1990networks}:

\begin{equation}
\label{equ_green}
\hat{P}PR(x;y)=\delta(x-y)
\end{equation}
The Dirac function $\delta(x-x_{i})$ in (\ref{equ_export_with_dirac}) is replaced with (\ref{equ_green}) to get the following formula:

\begin{equation}
\begin{aligned}
\label{equ_export_with_operator}
\hat{P}PD(x)&=\frac{1}{2\lambda}\sum_{i=1}^{N}[\mathbb{P}_{r}(x_{i})\frac{1}{D(x_{i})}\\&-\mathbb{P}_{g}(x_{i})\frac{1}{1-D(x_{i})}]\hat{P}PR(x;x_{i})
\end{aligned}
\end{equation}

From (\ref{equ_export_with_operator}), we derive the function $D(x)$:

\begin{equation}\label{equ_final_form}
D(x)=\frac{1}{2\lambda}\sum_{i=1}^{N}[\mathbb{P}_{r}(x_{i})\frac{1}{D(x_{i})}-\mathbb{P}_{g}(x_{i})\frac{1}{1-D(x_{i})}]R(x;x_{i})
\end{equation}
where $\mathbb{P}_{r}(x_{i})\frac{1}{D(x_{i})}-\mathbb{P}_{g}(x_{i})\frac{1}{1-D(x_{i})}$
is the distance of the real data from generated data. Let $w_{i}=\mathbb{P}_{r}(x_{i})\frac{1}{D(x_{i})}-\mathbb{P}_{g}(x_{i})\frac{1}{1-D(x_{i})}$ denotes the weights that the discriminator needs to learn, the function $D(x)$ can be written as in (\ref{equ_rbfnn_form}).

\begin{equation}\label{equ_rbfnn_form}
D(x)=\frac{1}{2\lambda}\sum_{i=1}^{N}w_{i}R(x;x_{i})
\end{equation}

In practical applications, the differential constrained operation $P$ is  rotationally and translationally invariant \cite{poggio1990networks}, therefore, $R$ in (\ref{equ_rbfnn_form}) will depend on the difference of its arguments, i.e. $R(x;x_{i})=R(||x-x_{i}||)$. Consequently, we get the final form:

\begin{equation}\label{equ_rbfnn_form_final}
D(x)=\frac{1}{2\lambda}\sum_{i=1}^{N}w_{i}R(||x-x_{i}||)
\end{equation}
where the function $R(\cdot)$ could be replace with any radial basis function, for example Gaussian kernel function, and introduce the width $\sigma$, we get an RBFNN based on Gaussian kernel functions:

\begin{equation}\label{equ_rbfnn_form_gaussian}
D(x)=\frac{1}{2\lambda}\sum_{i=1}^{N}w_{i}g_{1}(\mathbf{x}_{i},\mathbf{v},\sigma)
\end{equation}
\end{proof}

As a result, the loss function of a discriminator could be derived into a form of RBFNNs, which enables RBF-based discriminators to focus on the nonlinear and sparse data and ensure that the model has the minimum loss in the training set, that is, the hypersurface information contained in the training set is better utilized. Inspired by this theorem, RBFNNs can serve as the discriminator of a GAN while dealing with nonlinear spare data. Therefore, we introduce the GANs based on RBFNNs (i.e. RBF-GAN and RBFC-GAN) in next subsection.

\subsection{RBF-GAN and RBFC-GAN}

\subsubsection{Overview}
We proposed an RBF-GAN and an RBFC-GAN based on the Theorem \ref{theorem_best_D} for nonlinear sparse data regression and generation. The RBF-GAN or RBFC-GAN also consists of a G and a D. The G is a FCN with multiple hidden layers. While the D is an RBFNN with only one hidden layers. We call the D in an RBF-GAN as the RBF-D, and similarly, the D in an RBFC-GAN is called the RBFC-D. In this section, we discuss the RBF-GAN and RBFC-GAN, separately.

\subsubsection{RBF-GAN}
We introduce multiple Gaussian kernel functions into (\ref{equ_rbfnn_form_gaussian}) to form an novel RBF-D:

\begin{equation}\label{equ_rbf_D}
D(\mathbf{X})=w_{0}+\frac{1}{2\lambda}\sum_{i=1}^{N}\sum_{j=1}^{q}w_{ij} \cdot g_{1}(\mathbf{x}_{i},\mathbf{v}_{j},\sigma_{j})
\end{equation}
where $\mathbf{X}$ denotes the input matrix, $q$ denotes the number of Gaussian kernel functions, which equals to the number of hidden neurons in an RBF-D, $N$ denotes the number of data $\mathbf{X}$, $w_{ij}$ denotes the weight of the $j$th hidden neuron while dealing with the $i$th data, and $w_{0}$ denotes the bias of the only output neuron. We omit the coefficient $\frac{1}{2\lambda}$, and focus on a specific input vector $\mathbf{x}_{i}$, the (\ref{equ_rbf_D}) can be written as:

\begin{equation}\label{equ_output_of_rbf_GAN}
D(\mathbf{x}_{i})=w_{0}+\sum_{j=1}^{q}w_{j} \cdot g_{1}(\mathbf{x}_{i},\mathbf{v}_{j},\sigma_{j})
\end{equation}
where $w_{j}$ denotes the weight between one hidden neuron and the only output neuron. This formula indicates that an RBF-based discriminator consists of three different layers: one input layer (i.e. $\mathbf{x}_{i}$), one hidden layer (i.e. the Gaussian kernel function $\sum_{j=1}^{q}w_{j}g_{1}(\mathbf{x}_{i},\mathbf{v}_{j},\sigma_{j})$) and one output layer (i.e. $D(\mathbf{x}_{i})$). Fig.\ref{fig_structure_of_RBF_GAN} shows the structure of an RBF-GAN, where $Z$ is a noise with a distribution $\mathbb{P}_{z}$, ``Real data'' denotes the data with distribution $\mathbb{P}_{r}(x)$, while ``Generated data'' with distribution $\mathbb{P}_{g}(x)$ denotes the data generated by G. The output of the RBF-D is a variable with a value between 0 and 1, which indicates the probability of the generated data is realistic.

\begin{figure*}[tbh]
\centering
\includegraphics[scale=0.7]{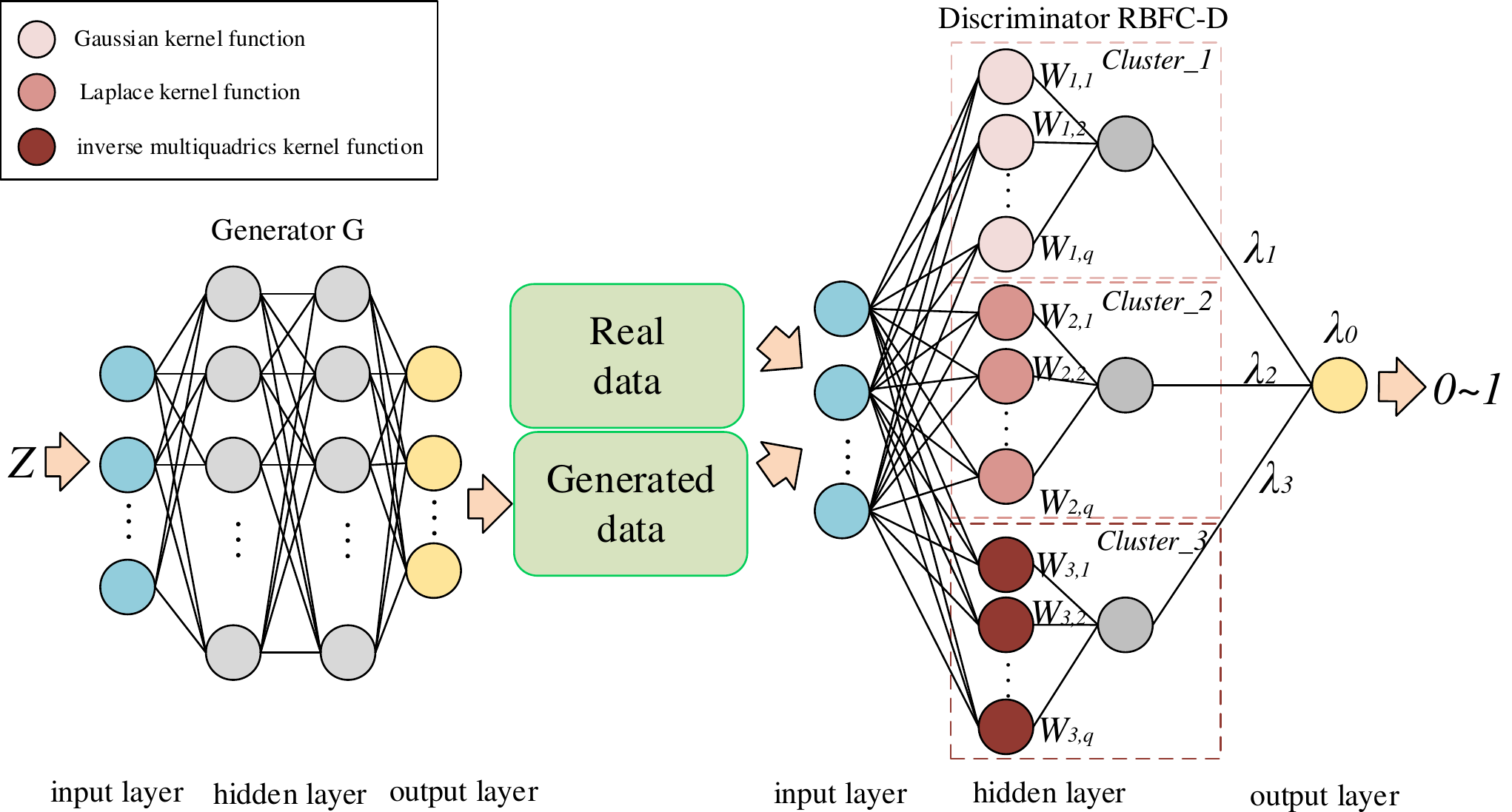}
\caption{The structure of an RBFC-GAN. Different from the RBF-D, the RBFC-D consists of multiple RBFNNs with different kernel functions. The RBFC-GAN is realized through combing different RBFNNs linearly.}\label{fig_structure_of_RBFC_GAN}
\centering
\end{figure*}

\subsubsection{RBFC-GAN}

We replace the Gaussian kernel functions in (\ref{equ_output_of_rbf_GAN}) with multiple RBFs, then the following formula can be obtained:

\begin{equation}\label{equ_output_of_RBFC_GAN}
D(\mathbf{x}_{i})=\lambda_{0}+\mathop{\sum_{z=1}^{m}}\lambda_{z}\sum_{j=1}^{q}w_{z,j}\cdot g_{z}\left(\mathbf{x}_{i},\mathbf{v}_{j},\sigma_{j}\right)
\end{equation}
where $m$ denotes the number of RBFs (i.e. the number of RBFNNs), $q$ denotes the number of hidden neurons in every RBFNN, $w_{z,j}$ denotes the weight of the $j$th hidden neuron in the $z$th RBFNN, and $\lambda_{z}$ denotes the coefficient for the $z$th RBFNN.

Formula (\ref{equ_output_of_RBFC_GAN}) describes the structure of an RBFC-D that consists of one input layer (i.e. $\mathbf{x}_{i}$), one hidden layer (i.e. $\sum_{z=1}^{m}\lambda_{z}\sum_{j=1}^{q}w_{z,j}\cdot g_{z}(x_{i},v_{j},\sigma_{j})$) and one output layer (i.e. $D(\mathbf{x}_{i})$). Different from the RBF-D, the hidden layer of an RBFC-D consists of multiple clusters of RBFNNs. In this paper we call a RBFNN (i.e. $\sum_{j=1}^{q}w_{z,j}\cdot g_{z}\left(\mathbf{x}_{i},\mathbf{v}_{j},\sigma_{j}\right)$) as a cluster. Therefore, Fig.\ref{fig_structure_of_RBFC_GAN} shows the structure conducted by (\ref{equ_output_of_RBFC_GAN}) with $m=3$, which means the RBFC-D consists of three different clusters (i.e. RBFNNs). The RBFs in every cluster are as follows:

\begin{equation}
\label{equ_multi_kernel_function}
\begin{cases}
\ensuremath{g_{1}\left(\mathbf{x}_{i},\mathbf{v}_{j,1},\sigma_{j,1}\right)=\exp\left(-\frac{\left\Vert \mathbf{x}_{i}-\mathbf{v}_{j,1}\right\Vert ^{2}}{2\sigma_{j,1}^{2}}\right)}\\
\ensuremath{g_{2}(\mathbf{x}_{i},\mathbf{v}_{j,2},\sigma_{j,2})=\exp(-\frac{||\mathbf{x}_{i}-\mathbf{v}_{j,2}||}{\sigma_{j,2}})}\\
\ensuremath{g_{3}(\mathbf{x}_{i},\mathbf{v}_{j,3},\sigma_{j,3})=\frac{1}{\sqrt{(\mathbf{x}_{i}-\mathbf{v}_{j,3})^{2}+\sigma_{j,3}^{2}}}}
\end{cases}
\end{equation}
where $g_{1}$, $g_{2}$ and $g_{3}$ denotes the Gaussian kernel function, the Laplace kernel function and the inverse multiquadrics kernel function, respectively, $\mathbf{v}_{j,z}$ and $\sigma_{j,z}$ with $z=1,2,3$ denote the center and width of $j$th RBF in $z$th RBFNN, respectively.

\section{Experiments And Results Analysis}

\subsection{DataSets}

To validate the effectiveness and feasibility of the RBF-GAN and the RBFC-GAN, three different datasets are employed, namely the Burgers' dataset, the cylindrical laminar dataset and the ONERA M6 dataset.

\subsubsection{Burgers' Dataset}
\begin{figure}[th]
\centering
\includegraphics[scale=0.3]{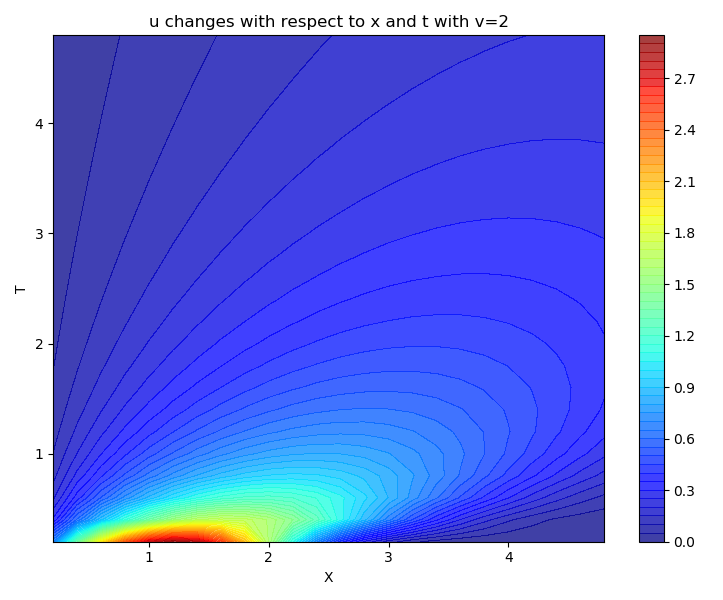}
\caption{The visualization results of the Burgers' Equation data. In this figure, the
x-coordinate represents the displacement $x$, while the y-coordinate
represents the time $t$.}\label{fig_visualization_of_burgers_dataset}
\centering
\end{figure}

The Burgers' equation is an one-dimensional PDE that expresses the movement of a shockwave across a tube:

\begin{equation}
\nonumber
\label{equ_burgers_equation}
\ensuremath{\frac{\partial u}{\partial t}+u\frac{\partial u}{\partial x}=v\frac{\partial^{2}u}{\partial x^{2}}}
\end{equation}
where $u$ denotes the velocity of the fluid (the shockwave), $t$ denotes the time, $x$ denotes the displacement, and $v$ denotes the viscosity coefficient. The variation range of design parameters ($t$, $x$ and $v$) is from 0.2 to 4.8, and the step is 0.2. Consequently, this dataset contains 13824 samples. Fig.\ref{fig_visualization_of_burgers_dataset} shows the variation of velocity $u$ with respect to time $t$ and displacement $x$ under the condition of $v=2.0$. In this figure, the red part in the lower left corner describes the variations of high-velocity flows. Therefore, this Burgers' dataset contains limited high-velocity nonlinear data that can be used to verify whether the model will cause mode collapses.

\subsubsection{Cylindrical Laminar Dataset}

The cylindrical laminar dataset is a two-dimensional application of the Navier-Stokes equation that simulates the pressure change on the surface of a cylinder while a flow passes the cylinder. The Navier-Stokes equation is as follows:

\begin{equation}
\nonumber
\label{equ_NS_equation}
\ensuremath{\rho\left[\frac{\partial V}{\partial t}+(V.\nabla)V\right]=-\nabla P+\rho g+v\nabla^{2}V}
\end{equation}
where $P$, $V$, $t$, $\rho$, $v$ denote the pressure, velocity, time, density and coefficient of viscosity, respectively. In this dataset, we used SU2 to calculate 6000 sample points \cite{vitale2020multistage}. The format and the variation range of the design parameters in this dataset are shown in Tab.\ref{tab_NS_input_data_format}. The format of the response parameters is shown in Tab.\ref{tab_NS_output_data_format}. Fig.\ref{fig_visualization_of_NS_dataset} shows the distribution of $C_{p}$ around the cylindrical with $v=2.0$.

\begin{table}[tbh]
\centering
\begin{tabular}{cccc}
\hline
variables & $x$ & $y$ & $Ma$\tabularnewline
\hline
significance & x-coordinate & y-coordinate & mach number\tabularnewline
start & 0.1 & -0.5 & 0.1\tabularnewline
end & 1 & 0.5 & 0.24\tabularnewline
step & 0.005 & 0.0078 & 0.01\tabularnewline
\hline
\end{tabular}
\caption{The data format and variation range of design parameters in cylindrical
laminar dataset.}\label{tab_NS_input_data_format}
\centering
\end{table}

\begin{table*}[tbh]
\centering
\begin{tabular}{ccccc}
\hline
variables & $P$ & $C_{p}$ & $F_{x}$ & $F_{y}$\tabularnewline
\hline
significance & pressure & coefficient of pressure & the friction in the x-coordinate & the friction in the x-coordinate\tabularnewline
\hline
\end{tabular}
\caption{The data format of response parameters in cylindrical laminar test
case.}\label{tab_NS_output_data_format}
\centering
\end{table*}

\begin{figure}[tbh]
\centering
\includegraphics[scale=0.3]{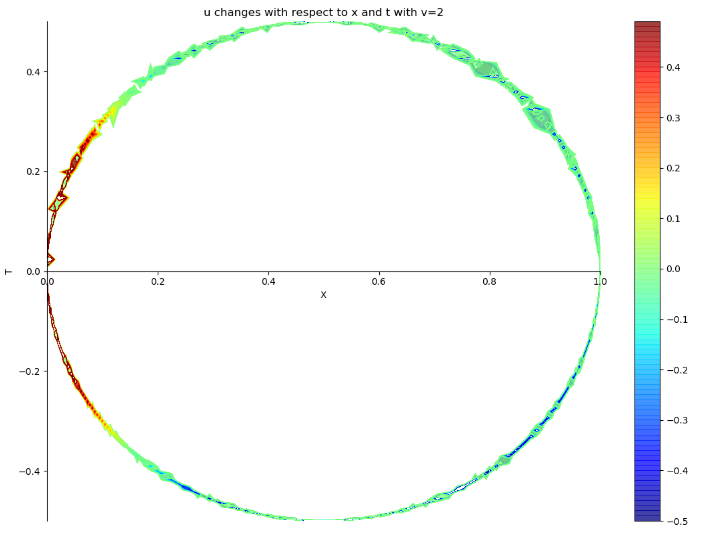}
\caption{The visualization results of the cylindrical laminar dataset. In this figure,
the x-coordinate represents the displacement $x$, while the y-coordinate
represents the time $t$.}\label{fig_visualization_of_NS_dataset}
\centering
\end{figure}

\begin{figure*}[bth]
\centering
\includegraphics[scale=0.25]{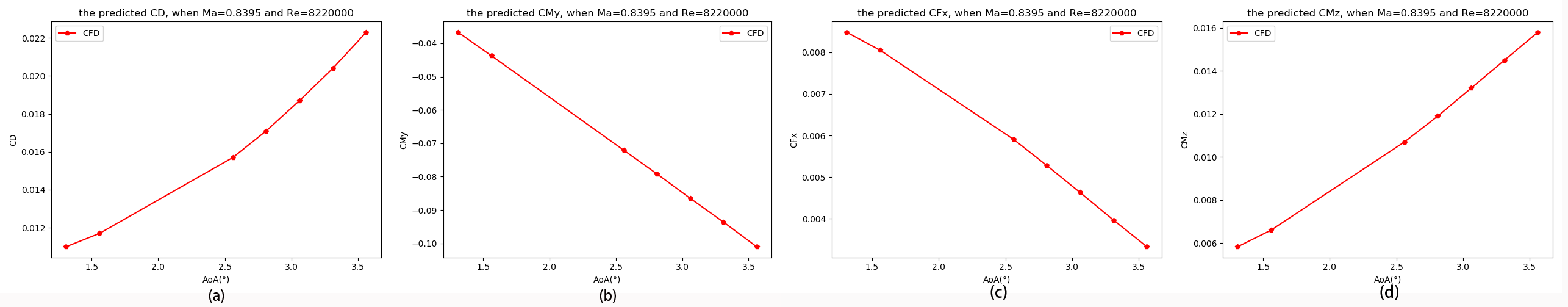}
\caption{The visualization results of coefficient of drag $C_{D}$ (a), coefficient of pitching moment $C_{M_{y}}$ (b), X component coefficient of friction $C_{F_{x}}$ (c) and coefficient of yawing moment $C_{M_{z}}$ (d) with $AoA(\alpha)$  under the condition of $Ma=0.8$ and $Re=8220000$. }\label{fig_visuliazation_of_M6_dataset}
\centering
\end{figure*}

\subsubsection{ONERA M6 Dataset}

The ONERA M6 dataset is given to describe the aerodynamic characteristics of a fixed ONERA M6 airfoil under different flow states \cite{balan2020verification}. We designed 770,000 grids (Fig.\ref{fig_M6_wing_and_grid}) on the surface and the far field of the ONERA M6 airfoil to calculate 755 sample points using SU2. The format and the variation range of design parameters in this dataset are shown in Tab.\ref{tab_M6_input_data_format}. The format of response parameters is shown in Tab.\ref{tab_M6_output_data_format}. We can see that the $AoA(\alpha)$ varies from 1.06 to 4.81, but there are only four different values, so the ONERA M6 dataset is a limited, non-linear and sparse dataset. Fig.\ref{fig_visuliazation_of_M6_dataset} illustrates the variation of coefficient of drag $C_{D}$, coefficient of pitching moment $C_{M_{y}}$, X component coefficient of friction $C_{F_{x}}$ and coefficient of yawing moment $C_{M_{z}}$ with $AoA(\alpha)$ when $Ma=0.8$ and $Re=8220000$.

\begin{figure}[tbh]
\centering
\includegraphics[scale=0.7]{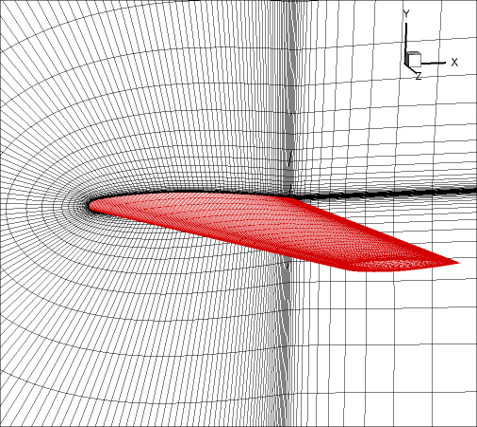}
\caption{The M6 wing and the grids on the surface and far field of the swing.}\label{fig_M6_wing_and_grid}
\centering
\end{figure}

\begin{table}[htb]
\centering
\begin{tabular}{cccc}
\hline
variables & $Ma$ & $AoA(\alpha)$ & $Re$\tabularnewline
\hline
significance & mach number & angle of attack & reynolds number\tabularnewline
start & 0.8045 & 1.06 & 8030000\tabularnewline
end & 0.8395 & 4.81 & 13000000\tabularnewline
step & 0.005 & 1.25 & 250000\tabularnewline
\hline
\end{tabular}
\caption{The data format and variation range of design parameters in M6 test
case.}\label{tab_M6_input_data_format}
\centering
\end{table}

\begin{table*}[tbh]
\centering
\begin{tabular}{cccc}
\hline
variables & $C_{L}$ & $C_{D}$ & $C_{F}$\tabularnewline
significance & coefficient of lift & coefficient of drag & coefficient of friction\tabularnewline
\hline
variables & $C_{M_{x}}$ & $C_{M_{y}}$ & $C_{M_{z}}$\tabularnewline
significance & coefficient of rolling moment  & coefficient of pitching moment & coefficient of yawing moment \tabularnewline
\hline
variables & $C_{F_{x}}$ & $C_{F_{y}}$ & $C_{F_{z}}$\tabularnewline
significance & X component coefficient of friction & Y component coefficient of friction & Z component coefficient of friction\tabularnewline
\hline
\end{tabular}
\caption{The data format of response parameters in ONERA M6 dataset.}\label{tab_M6_output_data_format}
\centering
\end{table*}

\subsection{Experimental Results}

We compare 5 different neural networks, namely FCN \cite{tenney2018deep}, ClusterNet \cite{white2020fast}, GAN/cGAN \cite{farimani2017deep}, RBF-GAN and RBFC-GAN, among which the last three are generative models we pay special attention to. The learning rate is 0.0001, with the batch size 128, and the number of iterations 2000 within all models. The G of the generative models takes a 62 dimensional noise which follows a uniform distribution $U(0,1)$ as a input. The initializations of centers and width of the hidden neurons in RBF-GAN or RBFC-GAN follow a uniform distribution $U(0,1)$ and a normal distribution $N(0.5, 0.2)$, respectively. The five models compared in this paper are implemented based on tensorflow framework. All programs run on four Tesla K80 GPUs. The errors of these models are evaluated by MSE and MSPE:

\begin{equation}
\nonumber
\label{equ_MSE_and_MSPE}
\begin{cases}
  \ensuremath{MSE=\frac{1}{n}\sum_{i=1}^{n}\left(f\left(x_{i}\right)-y_{i}\right)^{2}}\\
  \ensuremath{MSPE=\frac{1}{n}\sum_{i=1}^{n}\frac{\left\Vert f\left(x_{i}\right)-y_{i}\right\Vert ^{2}}{\left\Vert y_{i}\right\Vert ^{2}}\times100\%}
\end{cases}
\end{equation}
where $y_{i}$ denotes the real response values, $f(x_{i})$ denotes the generated response values under the same design parameters.

\subsubsection{Burgers' Dataset}
\begin{figure*}[htbp]
\centering
\includegraphics[scale=0.3]{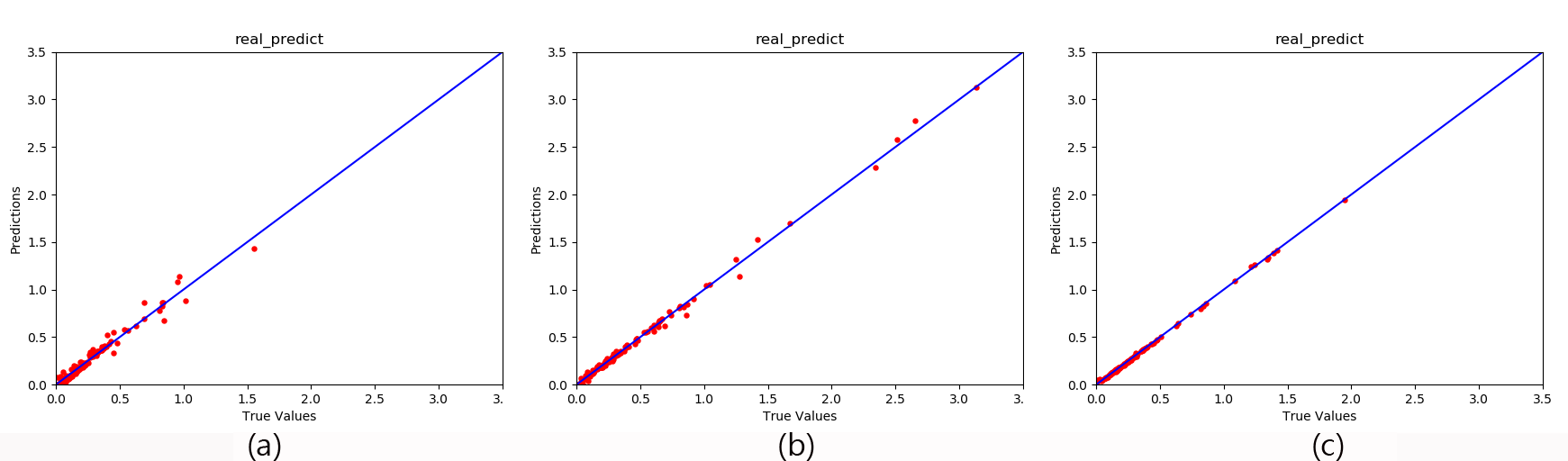}
\caption{The velocity $u$ generated by GAN (a), RBF-GAN(b) and RBFC-GAN (c)
in Burgers' dataset.}\label{fig_burgers_line}
\centering
\end{figure*}

\begin{figure*}[htbp]
\centering
\includegraphics[scale=0.22]{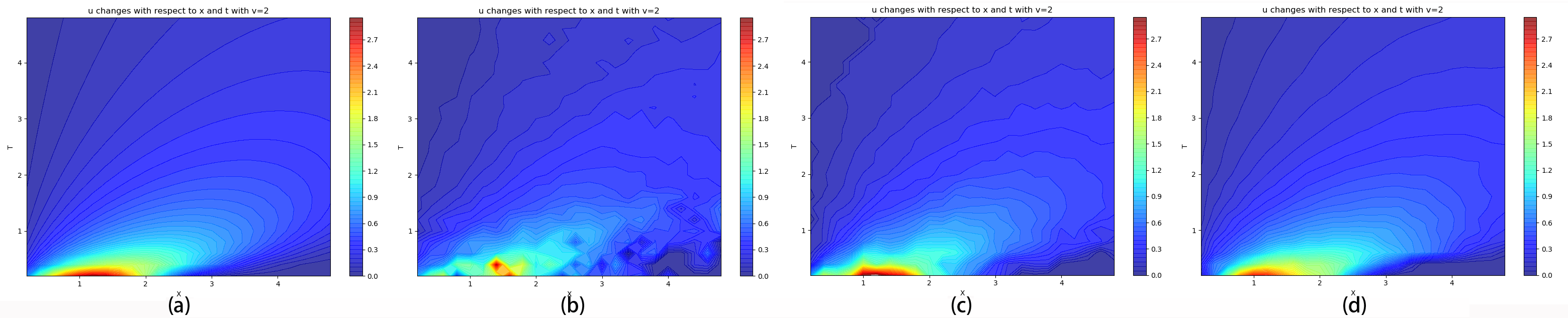}
\caption{The visualization results of velocity $u$ generated by CFD (a), classic GAN (b), RBF-GAN
(c) and RBFC-GAN(d).}\label{fig_burgers_compare_visualization}
\centering
\end{figure*}

The Burgers' equation has numerical solutions, hence the model errors is evaluated using the Burgers' equation itself instead of validation sets or test sets. (i.e. there is no need to divide the whole datasets into training sets, validation sets and test sets). Tab.\ref{tab_burgers_error} shows the comparison results of the above models with the Burgers' dataset. We can learn that the RBF-GAN is more accurate than GAN, but worse than the FCN and the ClusterNet. Notably, the MSE and the MSPE of the RBFC-GAN are the lowest among these models.

\begin{table}[htb]
\centering
\begin{tabular}{cccc}
\hline
Method     & Structure                                         & MSE                 & MSPE\tabularnewline
\hline
FCN        & 5,32                                              & $5.26\times10^{-4}$ & 5.92\%\tabularnewline
ClusterNet & 4;1,1; 4,5                                        & $4.31\times10^{-4}$ & 5.36\%\tabularnewline
GAN        & \tabincell{c}{G(62,64{*}1,4)\\D(4,64{*}1,1)}      & $2.07\times10^{-3}$ & 299.19\%\tabularnewline
RBF-GAN    & \tabincell{c}{G(62,1024{*}1,4)\\ D(4,1024{*}1,1)} & $9.87\times10^{-4}$ & 10.86\%\tabularnewline
RBFC-GAN   & \tabincell{c}{G(62,128{*}2,4)\\ D(4,(42,43,43),1)}& $8.21\times10^{-5}$ & 2.96\%\tabularnewline
\hline
\end{tabular}
\caption{The test MSE and MSPE of the FCN, the ClusterNet, the GAN, the RBF-GAN and the RBFC-GAN based on Burgers' dataset. In the column related to structure, ``5,32'' means that the optimal FCN has 5 hidden layers, each hidden layer has 32 neurons. ``4;1,1; 4,5'' means that the optimal ClusterNet consists of 4 clusters, each of which consists of a context network and a functional network. The context network has 1 hidden layer, each of which has only one neuron. The functional network has 4 hidden layers, each of which has 5 nodes. ``G(62,128{*}2,4)'' indicates that the FCN-based generator has four layers, the number of neurons in the
input layer, and the output layer is 62 and 4, respectively. ``128{*}2'' denotes that FCN-based generator has 2 hidden layers, each of which has 128 neurons. ``D(4,(42,43,43),1)'' indicates that the RBFs cluster-based discriminator has three layers, 4 and 1 denotes the number of neurons in the input layer, and the output layer, respectively. 42,43 and 43 denotes the number of hidden neurons of the Guassian kernel-based RBFNN, the Laplace kernel-based RBFNN and the inverse multiquadrics kernel-based RBFNN, respectively.}\label{tab_burgers_error}
\centering
\end{table}

To compare the ability of the three generative models to generate Burgers' data, the visualization results of velocity $u$ generated by the GAN, the RBF-GAN and the RBFC-GAN in Tab.\ref{tab_burgers_error} are shown in Fig.\ref{fig_burgers_line} and \ref{fig_burgers_compare_visualization}. From Fig.\ref{fig_burgers_line}, we can learn that the velocity $u$ generated by GAN satisfies $u<2.0$, which implies that the GAN can only generate low-velocity data, i.e. model collapse \cite{bang2018mggan}. Compared with the GAN, the RBF-GAN can generate data with $u>2.0$, which means that the RBF-GAN alleviates the problem of mode collapse. The RBFC-GAN is more accurate than the RBF-GAN, but still faces the problem of model collapse. Fig.\ref{fig_burgers_compare_visualization} also reflects the same results. The red part in the lower left corner of subgraph (c) is close to (a), which means that the RBF-GAN has the ability to generate high-velocity Burgers' data. Besides, compared with subgraph (b) and (c) in Fig.\ref{fig_burgers_compare_visualization}, the contour lines in subgraph (d) show a smooth change, which is close to subgraph (a).

\subsubsection{Cylindrical Laminar dataset}

\begin{figure*}[htbp]
\centering
\includegraphics[scale=0.35]{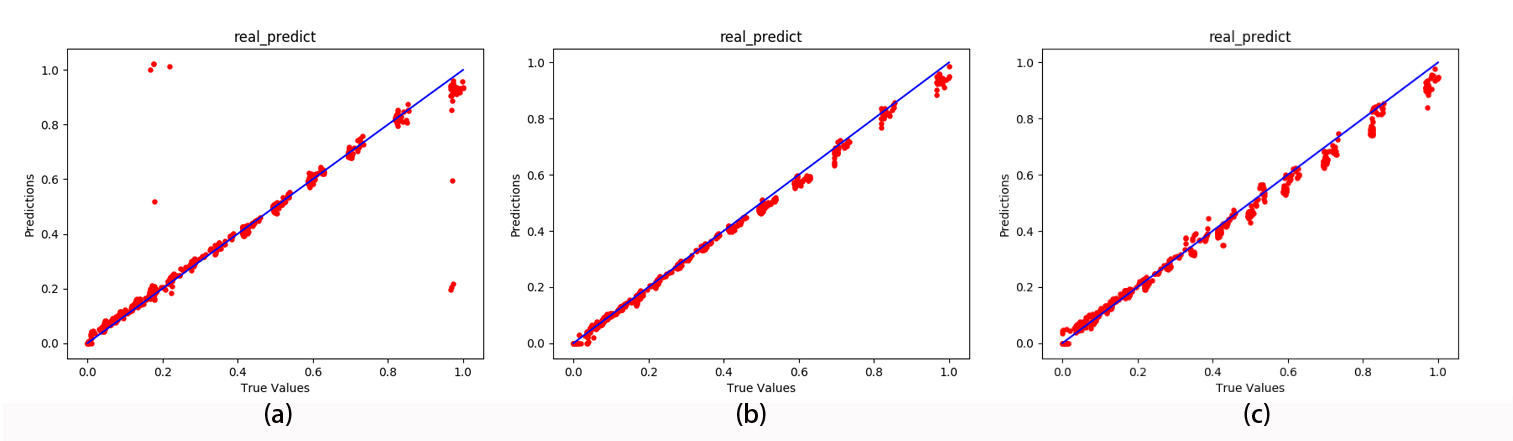}
\caption{The pressure $P$ generated by cGAN (a), RBF-cGAN (b) and RBFC-cGAN (c). }\label{fig_NS_P_line}
\centering
\end{figure*}

\begin{figure*}[htbp]
\centering
\includegraphics[scale=0.35]{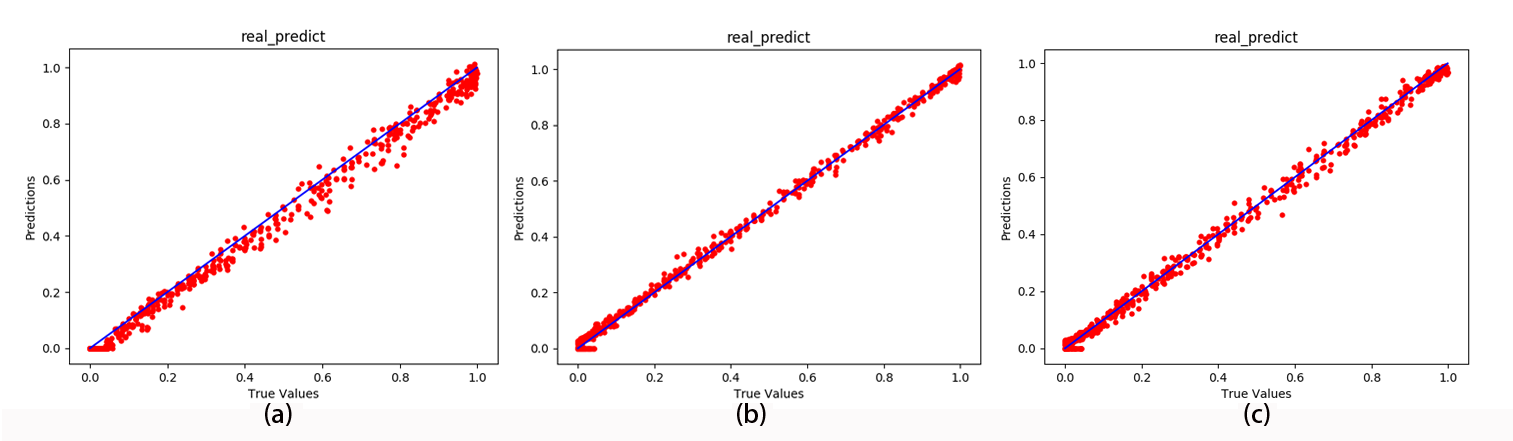}
\caption{The friction in the x-coordinate $F_{x}$ generated data by cGAN (a), RBF-cGAN (b) and RBFC-cGAN (c).}\label{fig_NS_Fx_line}
\centering
\end{figure*}

\begin{figure*}[htbp]
\centering
\includegraphics[scale=0.15]{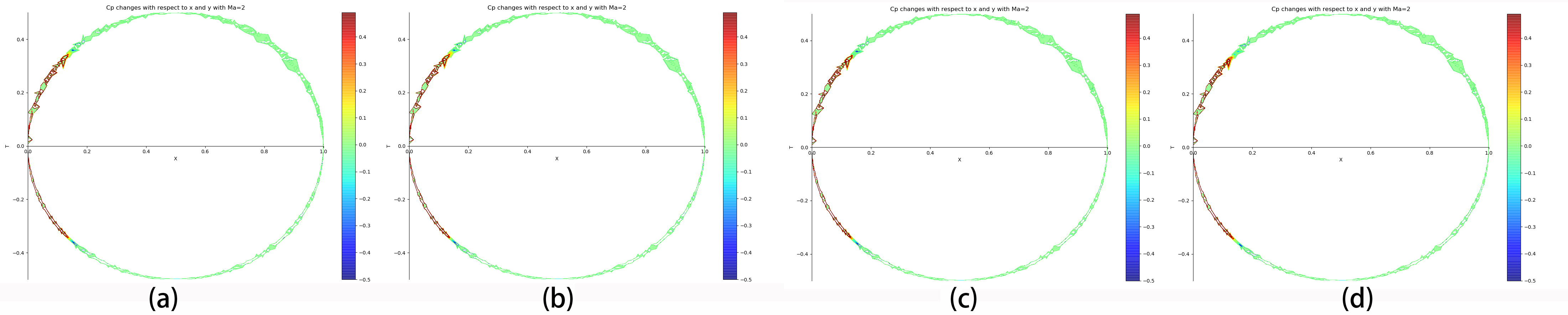}
\caption{The visualization results of coefficient of pressure $Cp$ generated by CFD (a), classic GAN (b), RBF-GAN (c) and RBFC-GAN(d). }\label{fig_NS_compare_visualization}
\centering
\end{figure*}
In the experiment of cylindrical laminar, we test the same neural networks as in Burgers' experiment. Unlike Burgers' dataset, the Navier-Stokes equation is more complex, and it is not realistic to calculate MSE and MSPE through this equation. Therefore, we divide the  whole data set into three sets: a training set, a validation set and a test set with a ratio of 8:1:1. Considering the inability of GANs to generate data under specific flow states, we adopted cGANs to simplify the calculation of MSE and MSPE. Tab.\ref{tab_NS_result} shows the test errors of the five neural networks based on the cylindrical laminar dataset. We can learn that the RBF-cGAN and the RBFC-cGAN are more accurate than the cGAN, and similar with the FCN and the ClusterNet.

\begin{table}[tbh]
\centering
\begin{tabular}{cccc}
\hline
model      & Structure                                          & MSE                 & MSPE\tabularnewline
\hline
FCN        & 5,64                                               & $2.48\times10^{-4}$ & 1.72\%\tabularnewline
ClusterNet & 4;1,5;3,64                                         & $6.49\times10^{-4}$ & 4.73\%\tabularnewline
cGAN       & \tabincell{c}{G(62,512{*}1,4)\\ D(7,512{*}1,1)}    & $3.98\times10^{-3}$ & 6.40\%\tabularnewline
RBF-cGAN   & \tabincell{c}{G(62,512{*}2,4)\\ D(7,512{*}1,1)}    & $2.60\times10^{-4}$ & 2.05\%\tabularnewline
RBFC-cGAN  & \tabincell{c}{G(62,128{*}3,4)\\ D(7,(42,43,43),1)} & $4.06\times10^{-4}$ & 4.42\%\tabularnewline
\hline
\end{tabular}
\caption{The test MSE and MSPE of the FCN, ClusterNet, cGAN, RBF-cGAN and
RBFC-cGAN based on Cylindrical Laminar dataset. }\label{tab_NS_result}
\centering
\end{table}

Fig.\ref{fig_NS_P_line} and \ref{fig_NS_Fx_line} shows the pressure $P$ and the friction in x-coordinate $F_{x}$ generated by the cGAN, the RBF-cGAN and the RBFC-cGAN in Tab.\ref{tab_NS_result}, respectively. We can learn that the $P$ and $F_{x}$ generated by cGAN is less accurate than that of the RBF-cGAN and the RBFC-cGAN. Fig.\ref{fig_NS_compare_visualization} shows the visualization results of coefficient of pressure $C_{p}$ by CFD, the cGAN, the RBF-cGAN and the RBFC-cGAN. The four subgraphs in Fig.\ref{fig_NS_compare_visualization} are similar, which indicates that the cGAN, the RBF-cGAN and the RBFC-cGAN can learn the distribution of pressure coefficient of cylindrical laminar dataset.

\subsubsection{ONERA M6 Dataset}

Similar with cylindrical laminar dataset, the whole ONERA M6 dataset are divided into a training set, a validation set and a test set with a ratio of 8:1:1. Besides, cGANs are still adopted in this dataset. In this experiment, the accuracy, stability and computational cost of the models are mainly discussed.

In terms of accuracy, we still choose MSE and MSPE as the indicators to evaluate the accuracy of models. Tab.\ref{tab_M6_result} shows the average errors of the ONERA M6 data generated by different models. We can learn that the RBF-cGAN and the RBFC-cGAN are more accurate than the cGAN, and similar to the FCN and the ClusterNet. Besides, the RBFC-cGAN is more accurate than the RBF-cGAN and the ClusterNet. Fig.\ref{fig_M6_CD_line} and \ref{fig_M6_CMY_line} show the $C_{D}$ and $C_{M_{y}}$ generated by three generative models. We can see that the $C_{D}$ and $C_{M_{y}}$ generated by the cGAN do not closely surround the 45-degree diagonal, but the data generated by the other two models are tightly clustered around the 45-degree diagonal. The visualization results of the remaining response parameters are similar. Fig.\ref{fig_M6_compare} illustrates the variation of $C_{D}$, $C_{M_{y}}$, $C_{F_{x}}$ and $C_{M_{z}}$ with $AoA$ respectively under the condition of $Ma=0.8395$ and $Re=8220000$. Compared with CFD, the data generated by cGAN always have large errors. As a result, the data generated by the RBF-cGAN and the RBFC-cGAN are more accurate than that generated by the cGAN, and the RBFc-GAN is more accurate than RBF-cGAN.

\begin{table}[tbh]
\centering
\begin{tabular}{cccc}
\hline
Method     & Structure                                      & MSE                & MSPE\tabularnewline
\hline
FCN        & 5 10                                           & $2.07\times10^{-6}$ & 0.23\%\tabularnewline
ClusterNet & 4; 1 16; 3 32                                  & $8.29\times10^{-6}$ & 0.36\%\tabularnewline
cGAN       & \tabincell{c}{G(62,512,9)\\D(12,512,1)}        & $5.48\times10^{-5}$ & 122.25\%\tabularnewline
RBF-cGAN   & \tabincell{c}{G(62,512,9)\\D(12,512,1)}        & $1.90\times10^{-5}$ & 0.72\%\tabularnewline
RBFC-cGAN  & \tabincell{c}{G(62,128,9)\\D(12,(42,43,43),1)} & $1.24\times10^{-5}$ & 0.28\%\tabularnewline
\hline
\end{tabular}
\caption{The test MSE and MSPE of classic cGAN and RBF-cGAN with M6 dataset. }\label{tab_M6_result}
\centering
\end{table}

\begin{figure*}[tbh]
\centering
\includegraphics[scale=0.36]{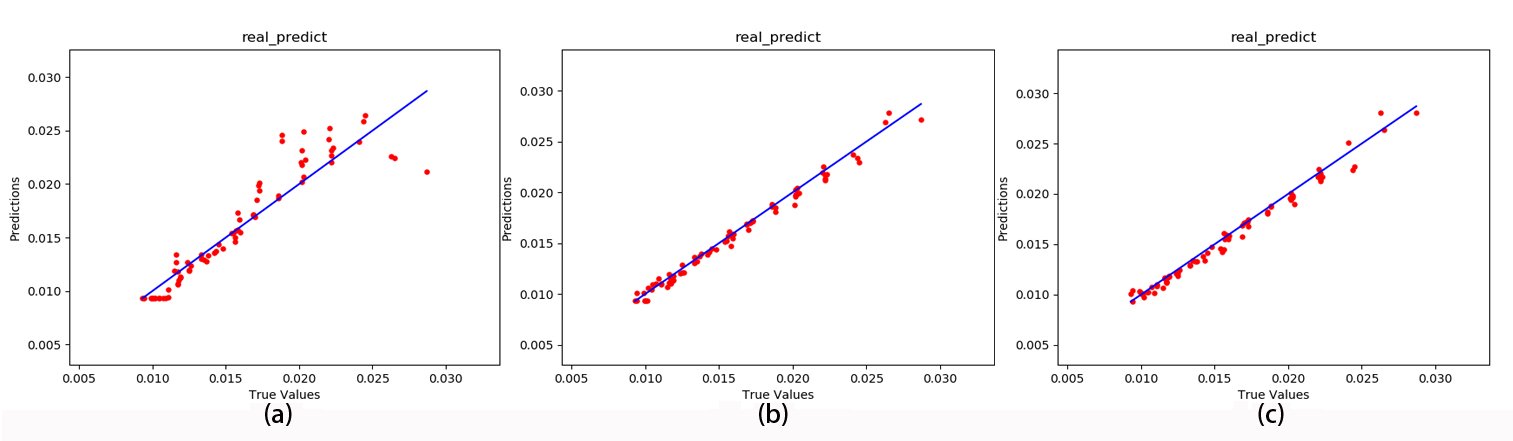}
\caption{The $C_{D}$ generated by cGAN(a), RBF-cGAN(b) and RBFC-cGAN(c). }\label{fig_M6_CD_line}
\centering
\end{figure*}

\begin{figure*}[tbh]
\centering
\includegraphics[scale=0.35]{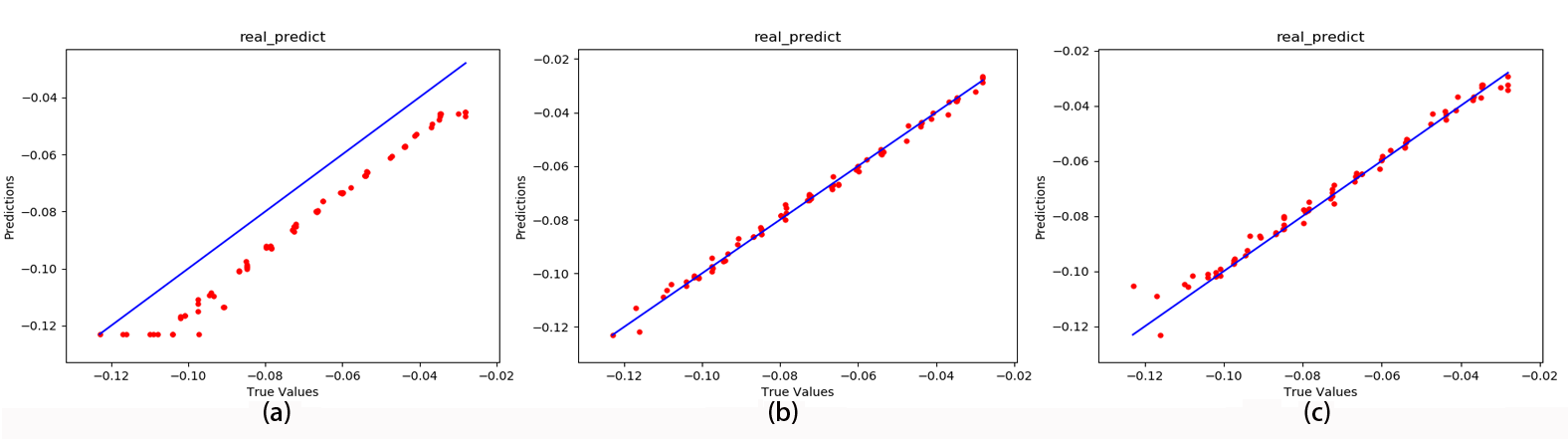}
\caption{The $C_{M_{y}}$ generated by cGAN(a), RBF-cGAN(b) and RBFC-cGAN(c). }\label{fig_M6_CMY_line}
\centering
\end{figure*}

\begin{figure*}[tbh]
\centering
\includegraphics[scale=0.25]{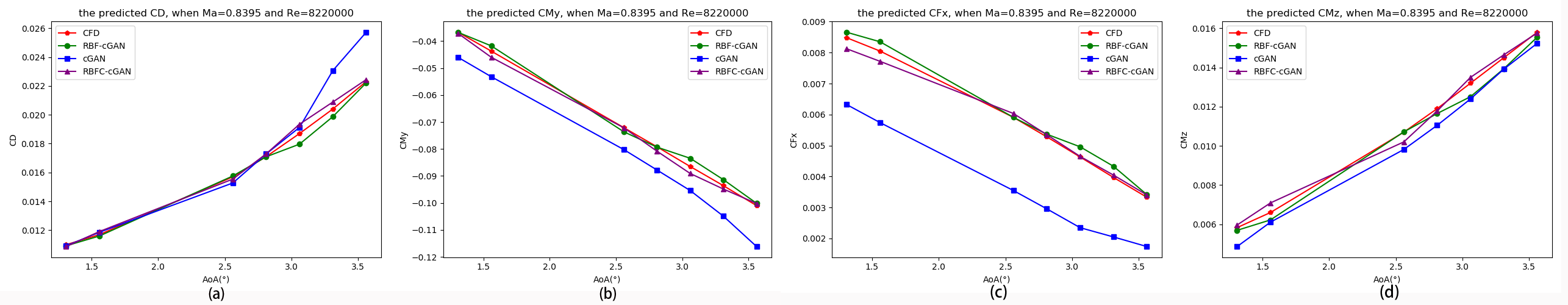}
\caption{The variation of drag coefficient $C_{D}$ (a), coefficient of pitching moment $C_{M_{y}}$ (b), X component coefficient of friction $C_{F_{x}}$ (c) and coefficient of yawing moment $C_{M_{z}}$ (d) with $AoA$ under the condition of $Ma=0.8395$ and $Re = 8220000$.}\label{fig_M6_compare}
\centering
\end{figure*}

In terms of stability, the validation errors (MSE) of the three generative models during the training process are compared in Fig.\ref{fig_stability}. We can learn that when these three models converged (epoch > 750), the validation MSE of the RBF-cGAN and the RBFC-cGAN are lower than that of the cGAN, and the validation error of the RBFC-cGAN are lower than that of the RBF-cGAN. Besides, the RBF-cGAN converges faster than other models. According to the variation of the validation MSE of the three models, the entire iteration period is divided into three intervals, i.e. {[}0,125), {[}125,750) and {[}750,2000{]}. Tab.\ref{tab_stability} illustrates the standard deviation of the validation MSE of the three models. The stability improvement $\eta$ of a model after convergence is calculated using
\begin{equation}
\nonumber
\label{equ_stability_error}
\eta=\frac{||\hat{\sigma}-\sigma||}{||\sigma||}\times100\%
\end{equation}
where $\hat{\sigma}$ and $\sigma$ denote standard deviations of two different models. Compared with cGAN, the stability improvement $\eta_{1}$ of RBF-cGAN and $\eta_{2}$ of RBFC-cGAN are 34.62\% and 72.31\%, respectively. However the stability improvement of RBFC-cGANs is based on the cost of convergence time.

As for the computational costs, because SU2 cannot run on GPU, we run all approaches on CPU to compare the average computational cost for generating a ONERA M6 data, which is shown in Tab.\ref{tab_M6_average_time}. In term of average time, SU2 takes the longest time to calculate a data, while the rest of models can output a data within 0.05 second, among which the FCN takes the shortest time to output a data, while the cGAN takes longest time to do so. Both of the RBF-cGAN and RBFC-cGAN are faster than the cGAN, but slower than the FCN. In terms of hardware resources, SU2 occupies the most CPU and memory usage, followed by the clusterNet, the RBF-cGAN. The RBFC-cGAN, the cGAN and the FCN occupy the least hardware resources.

\begin{figure}[bht]
\centering
\includegraphics[scale=0.35]{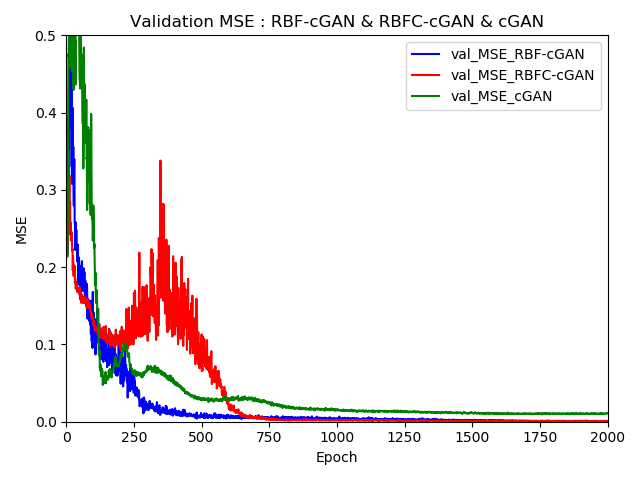}
\caption{The validation MSE during the training process of the three generative models.}\label{fig_stability}
\centering
\end{figure}

\begin{table}[tbh]
\centering
\begin{tabular}{r@{\extracolsep{0pt}.}lr@{\extracolsep{0pt}.}lr@{\extracolsep{0pt}.}lr@{\extracolsep{0pt}.}l|r@{\extracolsep{0pt}.}l}
\hline
\multicolumn{2}{c}{models} & \multicolumn{2}{c}{{[}0,125)} & \multicolumn{2}{c}{{[}125,750)} & \multicolumn{2}{c|}{{[}750,2000{]}} & \multicolumn{2}{c}{$\eta$ }\tabularnewline
\hline
\multicolumn{2}{c}{cGAN} & 0&16 & 0&027 & 0&0013 & \multicolumn{2}{c}{\textendash{}}\tabularnewline
\multicolumn{2}{c}{RBF-cGAN} & 0&082 & 0&048 & 0&00085 & 34&62\%\tabularnewline
\multicolumn{2}{c}{RBFC-cGAN} & 0&082 & 0&074 & 0&00036 & 72&31\%\tabularnewline
\hline
\end{tabular}
\caption{The standard deviation of the validation error of the cGAN, RBF-cGAN and RBFC-cGAN in three intervals divided according to the training epochs. $\eta$ calculated by (\ref{equ_stability_error}) denotes the stability improvement compared with cGAN when the model converges.}\label{tab_stability}
\centering
\end{table}

\begin{table}[tbh]
\centering
\begin{tabular}{cccc}
\hline
Method      & Average time(s) & CPU usage(\%) & Memory usage(MB)\tabularnewline
\hline
SU2         & 6600            &89             &15194.33\tabularnewline
FCN         & 0.0048          &41.56          &174.23\tabularnewline
ClusterNet  & 0.0085          &86.32          &212.45\tabularnewline
cGAN        & 0.024           &53.00          &176.82\tabularnewline
RBF-cGAN    & 0.0206          &76.21          &186.98\tabularnewline
RBFC-cGAN   & 0.0065          &58.79          &188.58\tabularnewline
\hline
\end{tabular}
\caption{The average computational cost of approaches to output a data with ONERA M6 dataset. }\label{tab_M6_average_time}
\centering
\end{table}

\subsection{Results Analysis}

\subsubsection{The Errors}
Consider a simple Gaussian kernel function $K(x,v)$ with $\sigma=1$:

\begin{equation}
\label{equ_proof_of_dimension}
\begin{aligned}
K(x,v)&=exp(-(x-v)^{2})\\&=exp(-x^{2})\cdot exp(-v^{2})\cdot exp(2xv)
\end{aligned}
\end{equation}
where $x$ and $v$ denote the input and the center of $K(x,v)$, respectively. $exp(2xv)$ can be expanded using the Taylor expansion:

\begin{equation}
\label{equ_proof_of_dimension_taylor}
\begin{aligned}
K(x,v)&=exp(-x^{2})\cdot exp(-v^{2})\cdot \sum_{i=0}^{\infty}\frac{(2xv)^{i}}{i!}
\\&=\sum_{i=0}^{\infty}exp(-x^{2})\cdot exp(-v^{2})\sqrt{\frac{2^{i}}{i!}}x^{i}\sqrt{\frac{2^{i}}{i!}}v^{i}
\end{aligned}
\end{equation}

Let $\Phi(x)=exp(-x^{2})\cdot(1,\sqrt{\frac{2}{1}}x,\sqrt{\frac{2^{2}}{2!}}x^{2},\cdots)^T$, then $K(x,v)=\Phi(x) \cdot \Phi(v)$. As we can see, $\Phi(x)$ is an infinite dimensional vector, which demonstrated that the Gaussian kernel function maps low dimensional data into a higher dimensional space. The distance between two similar distributions in the low dimensional space (i.e. the input space) is magnified in a higher dimensional space (i.e. the hidden space). Hence, an RBF-D improves the ability to distinguish the real data from the generated data, which directly improves the quality of data generated by the corresponding generator.

In the RBFC-GAN, three different RBFs are linearly combined into a more complex multi-kernel RBF \cite{kou2017multi}. A multi-kernel RBF means that the low dimensional inputs can be mapped into a higher dimensional space than an RBF-GAN, which can further reduce the prediction error \cite{white2020fast} under the condition of the same number of neurons.

\subsubsection{The Stability}
We still consider the $K(x,v)$ with $\sigma=1$:
\begin{equation}
\label{equ_local_updating}
K(x,v)=exp(-(x-v)^{2})=\begin{cases}
1 & x=v\\
0 & |x-v| \gg 0
\end{cases}
\end{equation}

For any given input $x$, only hidden neurons whose center $v$ closes to $x$ are activated, which causes a small variation in $D(x)$. However, the hidden neurons of a FCN are all activated, which causes a relatively large variation in $D(x)$. Therefore, the RBF-GAN is more stable than FCN-based GANs.

Compared with the RBF-GAN, the form of clusters of RBFNNs can further improve the stability of the RBFC-GAN under the condition of the same number of neurons. In the RBFC-GAN, different RBFs are employed simultaneously to overcome the fluctuations of gradients. The gradients of weights in an RBFC-GAN are as flollows:

\begin{equation}\label{equ_derivation_of_wzj}
\frac{\partial L_{D}}{\partial W z j}=\frac{\partial L_{D}}{\partial D(x)} \cdot \frac{\partial D(x)}{\partial W z j}=\frac{\partial L_{D}}{\partial D(x)} \cdot \lambda_{z} \cdot g_{z}\left(x_{i}, v_{j}, \sigma_{j}\right)
\end{equation}
where $L_{D}$ (\ref{equ_loss_D_norm}) denotes the loss function of the RBFC-D, $W_{zj}$ denotes the weights of the RBFC-D. The gradient of weights in the RBFC-GAN does not fluctuates radically because the weighting of three different radial basis functions makes the value of $D(x)$ and $L_{D}$ more stable after the RBFC-GAN converged.

\subsubsection{The computational cost}
In term of average time for generating a M6 data, the local updating strategy of RBFNNs accelerates the convergence procedure, which results in that both the RBF-GAN and RBFC-GAN are faster than the FCN-based GAN. Besides, the clusters of RBFc-GAN do not interfere with each other, and the scale of each cluster is relatively small, hence, the RBFC-GAN is faster than RBF-GAN. In term of hardware resources uages, the matrix multiplication in GANs is replaced by RBF, which is a more complex operator. Therefore, the hardware resources usage of RBF-GAN and RBFC-GAN are more than FCN and FCN-based GANs.

\section{Conclusions}

In this paper, the nonlinear full domain FFD generation and regression based on GANs is proposed. We firstly prove that an RBFNN is the optimal discriminator of a GAN while dealing with nonlinear sparse data. Then, an RBF-GAN and an RBFC-GAN are further proposed. Compared with existing models, the RBF-GAN and the RBFC-GAN not only reduce the MSE and MSPE of generated FFD, but also improve the stability of GANs. Finally, three different datasets are used to validate the feasibility of our models. The detailed conclusions are as follows:

\begin{enumerate}
\item[a)] we prove that an RBFNN is the optimal discriminator of a GAN while dealing with nonlinear sparse data;
\item[b)] the RBF-GAN is more accurate and stable than GANs/cGANs in the field of flow field reconstructions;
\item[c)] the RBFC-GAN is more accurate and stable than GANs/cGANs and RBF-GANs in the same field;
\item[d)] compared with cGAN, the stability of RBF-cGAN and RBFC-cGAN improves by 34.62\% and 72.31\%, respectively.
\end{enumerate}

In addition, we firmly believe that the proposed RBF-GAN and RBFC-GAN are not only suitable for flow field reconstructions, but also available for many other fields (e.g. the predictions of aerodynamic performance and optimization designs, etc) in the area of aerodynamics. Besides, the optimal generator of GANs is still unsolved, which will be one of our future works.

\section*{Acknowlegment}

This work was supported by the National Numerical Wind Tunnel Project No.13RH19ZT6B1 and No.13RH19ZT62B.

Thanks to Professor. Zhonghua Han and Chenzhou Xu from School of Aeronautics, Northwestern Polytechnical University for providing us with technical support about SU2.

\bibliographystyle{IEEEtran}
\bibliography{Flow_Field_Reconstructions_with_GANs_based_on_Radial_Basis_Functions}

\begin{IEEEbiography}[{\includegraphics[width=1in,height=1.25in,clip,keepaspectratio]{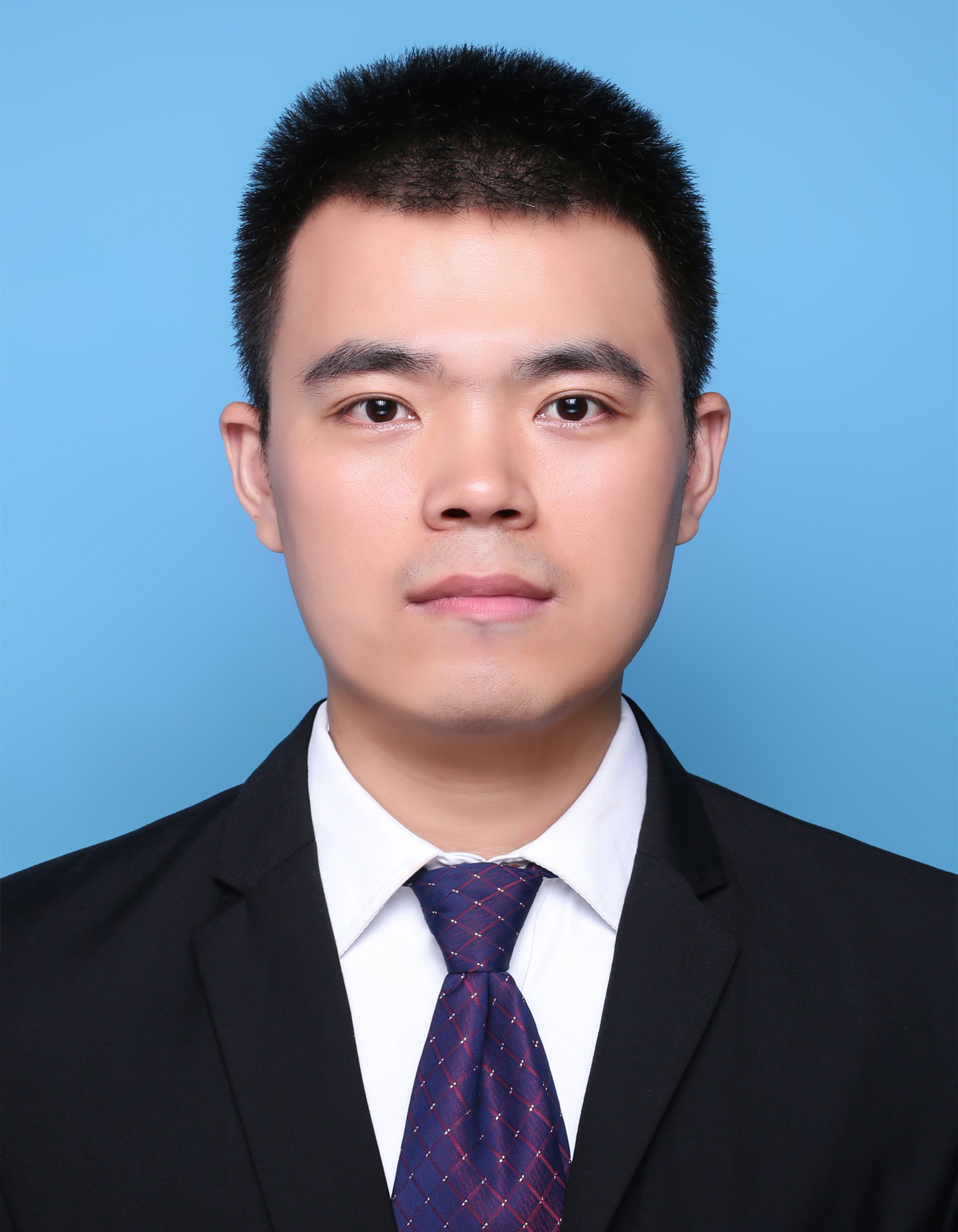}}]{Liwei Hu} received his B.S. degree in software engineering from the School of Software, Hebei Normal University, Shijiazhuang, Hebei, China, in 2014 as well as an M.S. degree in computer technology from the University of Electronic Science and Technology of China (UESTC), Chengdu, Sichuan, China, in 2018. He is currently pursuing a Ph.D. degree in computer science and technology from the School of Computer Science and Engineering, UESTC. His research fields include aerodynamic data modeling, deep learning and pattern recognition.
\end{IEEEbiography}

\begin{IEEEbiography}[{\includegraphics[width=1in,height=1.25in,clip,keepaspectratio]{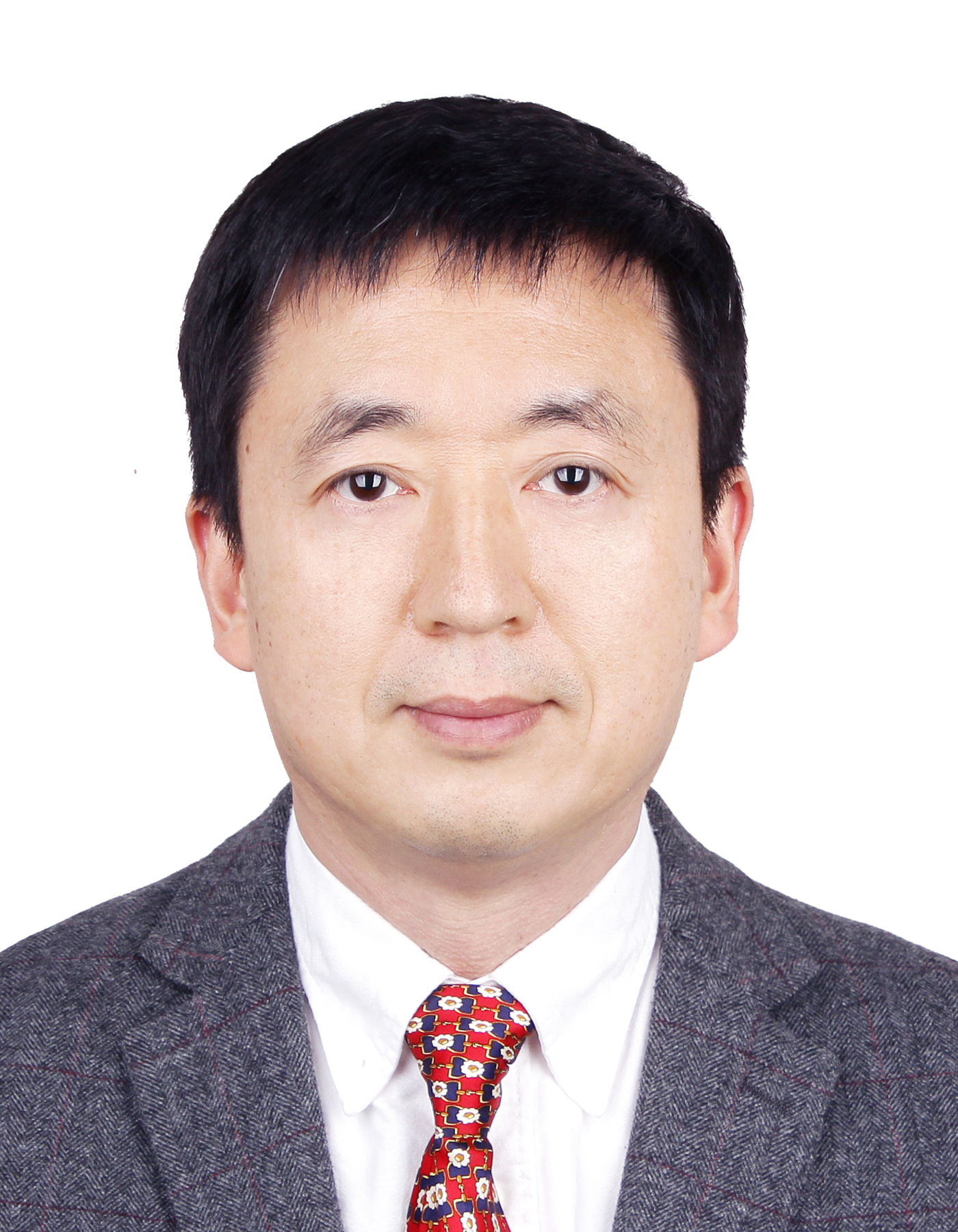}}]{Wenyong Wang} received his B.S. degree in computer science from BeiHang University, Beijing, China, in 1988 and M.S. and Ph.D. degrees from the University of Electronic Science and Technology (UESTC), Chengdu, China, in 1991 and 2011, respectively. He has been a professor in computer science and engineering at UESTC since 2006. Now he is also a special-term professor at Macau University of Science and Technology, a senior member of the Chinese Computer Federation, a member of the expert board of the China Education and Research Network (CERNET) and China Next Generation Internet. His main research interests include next generation Internet, software-defined networks, and software engineering.
\end{IEEEbiography}

\begin{IEEEbiography}[{\includegraphics[width=1in,height=1.25in,clip,keepaspectratio]{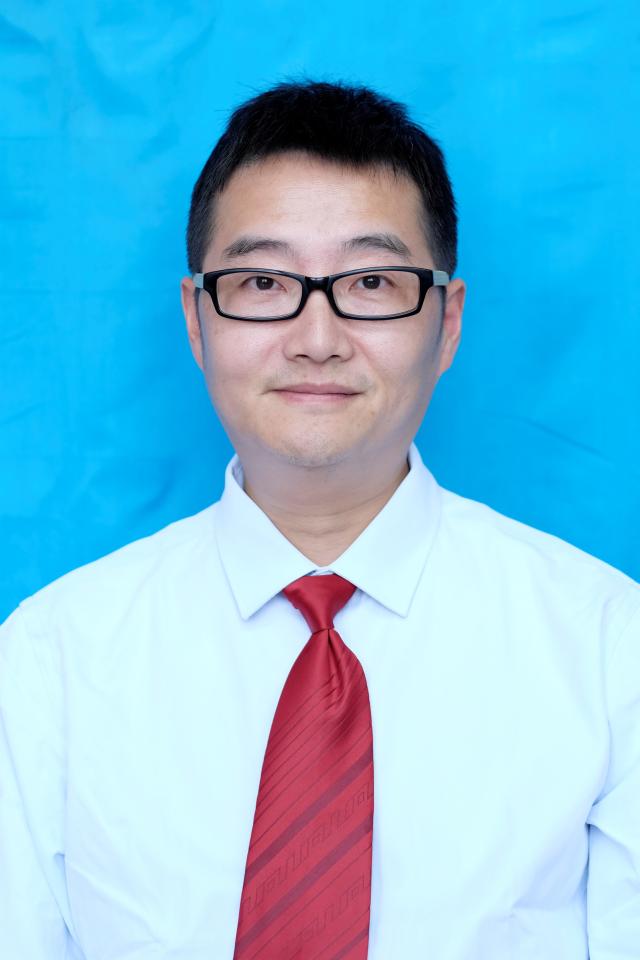}}]{Yu Xiang} received his B.S, M.S. and Ph.D. degrees from the University of Electronic Science and Technology of China (UESTC), Chengdu, Sichuan, China, in 1995, 1998 and 2003, respectively. He joined the UESTC in 2003 and became associate professor in 2006. From 2014-2015, he was a visiting scholar at the University of Melbourne, Australia. His current research interests include computer networks, intelligent transportation systems and deep learning.
\end{IEEEbiography}

\begin{IEEEbiography}[{\includegraphics[width=1in,height=1.25in,clip,keepaspectratio]{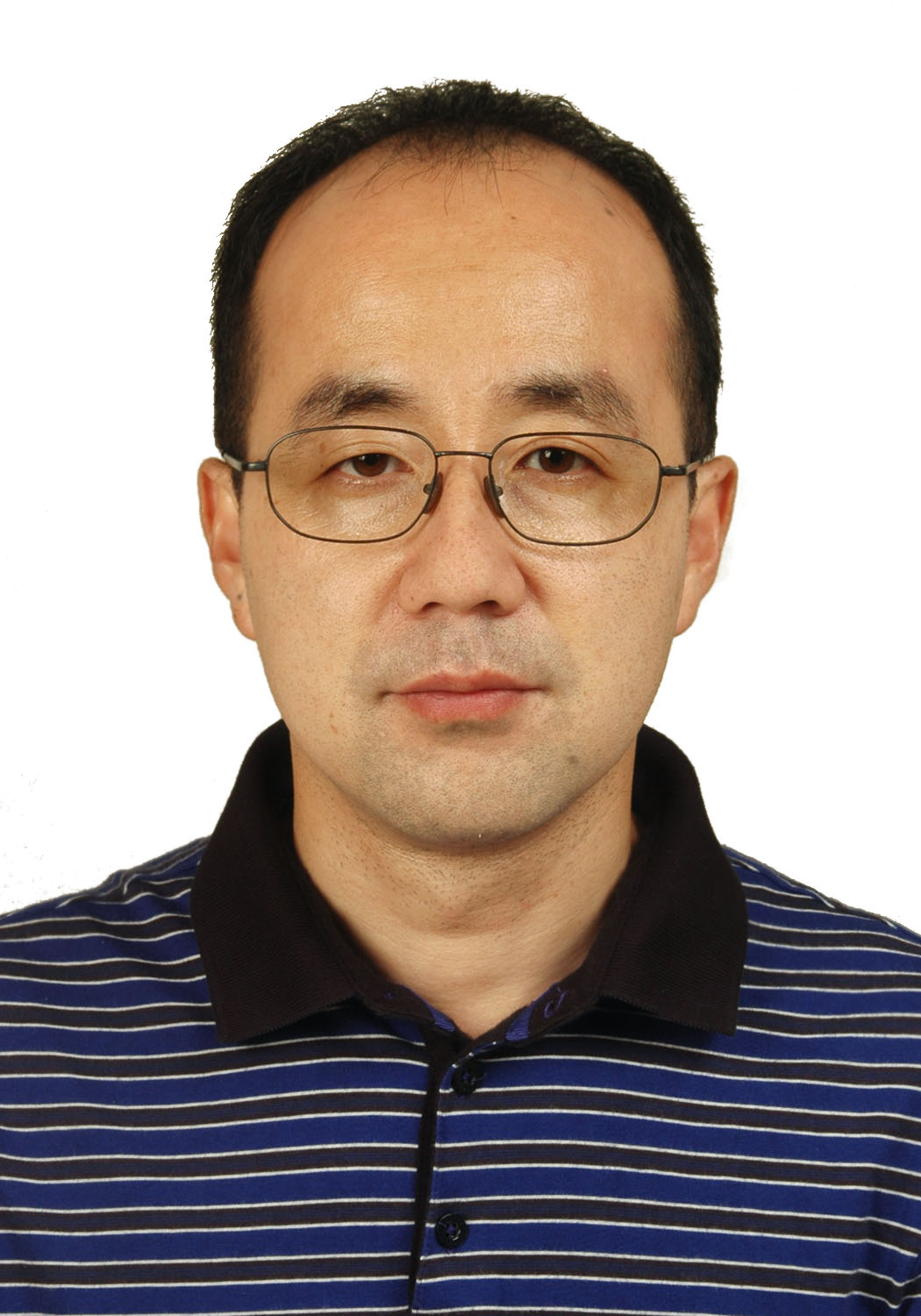}}]{Jun Zhang} received his B.S. and M.S. degrees in electronic engineering from the University of Electronic Science and Technology of China (UESTC) in 1995 and 1998, respectively. From 1998 to 2008, he worked as a senior researcher and engineer in CERNET. He is currently a lecturer at the School of Computer Science and Engineering, UESTC. His current research interests include software-defined networks, machine learning applied in network traffic engineering, and aerodynamics.
\end{IEEEbiography}

\end{document}